\newtheorem{theorem}{Theorem}
\newtheorem{remark}[theorem]{Remark}
\newtheorem{proposition}[theorem]{Proposition}
\newtheorem{definition}[theorem]{Definition}
\newtheorem{corollary}[theorem]{Corollary}
\newtheorem{lemma}[theorem]{Lemma}
\newcommand{\bfi}{\bfseries\itshape}
\def\MM#1{\boldsymbol{#1}}
\def\contract{\makebox[1.2em][c]{\mbox{\rule{.6em}
{.01truein}\rule{.01truein}{.6em}}}}
\numberwithin{equation}{section}
\title{Variational Principles for Stochastic Fluid Dynamics}
\author{Darryl D Holm
\\Mathematics Department
\\Imperial College London}
\date{ \textit{Proc Roy Soc A} RSPA-2014-0963 \\ $\,$\\ \small
Keywords: Geometric mechanics, stochastic fluid models; cylindrical stochastic processes; \\
multiscale fluid dynamics; symmetry reduced variational principles}                                           
\begin{document}
\maketitle
\makeatother
\begin{abstract}
This paper derives stochastic partial differential equations (SPDEs) for fluid dynamics from a stochastic variational principle (SVP). The Legendre transform of the Lagrangian formulation of these SPDEs yields their Lie-Poisson Hamiltonian form. The paper proceeds by: taking variations in the SVP to derive stochastic Stratonovich fluid equations; writing their It\^o representation; and then investigating the properties of these stochastic fluid models in comparison with each other, and with the corresponding deterministic fluid models. The circulation properties of the stochastic Stratonovich fluid equations are found to closely mimic those of the deterministic ideal fluid models. As with deterministic ideal flows, motion along the stochastic Stratonovich paths also preserves the helicity of the vortex field lines in incompressible stochastic flows. However, these Stratonovich properties are not apparent in the equivalent It\^o representation, because they are disguised by the quadratic covariation drift term arising in the Stratonovich to It\^o transformation. This term is a geometric generalisation of the quadratic covariation drift term already found for scalar densities in Stratonovich's famous 1966 paper. The paper also derives motion equations for two examples of stochastic geophysical fluid dynamics (SGFD); namely, the Euler-Boussinesq and quasigeostropic approximations.



\end{abstract}

%
\newpage

\section{Introduction}

In this paper we propose an approach for including stochastic processes as cylindrical noise in systems of evolutionary partial differential equations (PDEs) that derive from variational principles which are invariant under a Lie group action. Such dynamical systems are called Euler-Poincar\'e equations \cite{MaRa1994,HoMaRa1998}. The main objective of the paper is the inclusion of stochastic processes in ideal fluid dynamics, in which case the variational principle is invariant under the Lie group of smooth invertible maps acting to relabel the reference configuration of Lagrangian coordinates for the fluid. Examples include Euler's fluid equations for incompressible flows and also geophysical fluid dynamics (GFD) of ocean and atmosphere circulation. The approach is via a stochastic extension of the well known variational derivation of the Eulerian representation of ideal fluid dynamics \cite{HoMaRa1998}. 

The resulting stochastic partial differential equations (SPDEs) contain a type of multiplicative, cylindrical, Stratonovich noise that depends on the \emph{gradients} of the solution variables. This unfamiliar feature does not interfere with the passage to the It\^o representation, though, since the space variable is treated merely as a parameter when dealing with cylindrical noise. That is, one may regard the cylindrical noise process as a finite dimensional stochastic process parametrized by $\MM x$ (the space variable).  Then, the Stratonovich equation makes analytical sense pointwise, for each fixed $\MM x$.  Once this is agreed, then the transformation to It\^o by the standard method also makes sense pointwise in space. 

To specify the spatial correlations required in applications of the cylindrical Stratonovich noise process, we advocate the strategy of applying Proper Orthogonal Decompositions (PODs) to the appropriate numerical and observational data available at resolvable scales \cite{BeHoLu1993}. One may then regard the stochastic process as arising physically as the effect of sub-grid scale degrees of freedom on the resolved scales of fluid motion.%
\footnote{PODs are also called Empirical Orthogonal Functions (EOFs), Karhunen-Loeve projections (KLPs), and Singular Value Decompositions (SVDs) and they comprise a standard approach that has been useful in a variety of contexts \cite{BeHoLu1993}.}

In more detail, the aim of this paper is to use the methods of geometric mechanics to enable fluid dynamics to be effectively adapted to include Stratonovich stochastic processes. 
Toward this end, we derive a set of stochastic fluid equations for the motion of an either compressible, or incompressible fluid in $\mathbb{R}^3$ from a \emph{stochastically constrained} variational principle $\delta S = 0$, with action, $S$, given by
\begin{align}
S(u,p,q)  &=
\int \bigg( \ell(u,q) dt 
+ \left\langle  p\,,\,{dq} + \pounds_{dx_t} q\,\right\rangle_V \bigg)
\,,\label{SVP1}
\end{align}
where $\ell(u,q)$ is the unperturbed deterministic fluid Lagrangian, written as a functional of velocity vector field $u$ and advected quantities $q$.
Here, $u\in\mathfrak{X}(\mathbb{R}^3)$ is the fluid velocity vector field, the angle brackets  
\begin{equation}
\langle \,p\,,\,q\,\rangle_V:=\int <p(x),q(x,t)>dx
\label{L2pairing}
\end{equation}
denote the spatial $L^2$ integral  over the domain of flow of the pairing $<p\,,\,q>$ between elements $q\in V$ and their dual elements $p\in V^*$. In \eqref{SVP1}, the quantity $p\in V^*$ is a Lagrange multiplier and $\pounds_{dx_t}q$ is the \emph{Lie derivative} of an {advected quantity} $q\in V$, along a vector field $dx_t$ defined by the following sum of a drift velocity $u(x,t)$ and Stratonovich stochastic process with \emph{cylindrical noise} parameterised by spatial position $x$, \cite{Pa2007,Sc1988} 
\begin{equation}
dx_t(x) = u(x,t)\,dt - \sum_i \xi_i(x)\circ dW_i(t)
\,.\label{vel-vdt}
\end{equation}
\begin{remark}\rm
Notice at the outset that $dx_t(x)$ is a stochastic vector field parameterised by the spatial position $x$. The corresponding time integral of this vector field is a stochastic \emph{function} $(\omega,t,x)\to x_t(\omega,x)$ defined as%
\footnote{The symbol $\omega$ for stochastic quantities will be understood throughout, but will not be written explicitly hereafter.}
\begin{equation}
x_t(x) = x_0(x) + \int_0^t u(x,s)\,ds - \sum_i\int_0^t\xi_i(x)\circ dW_i(s)
\,.\label{vel-path}
\end{equation}
In particular, the $x_t$ treated here is \emph{not} the stochastic flow satisfying the SDE 
\begin{equation}
x_t(x) \ne x + \int_0^t u(x_s(x),s)\,ds - \sum_i\int_0^t\xi_i(x_s(x))\circ dW_i(s)
\,,\label{vel-SDE}
\end{equation}
which arises in other treatments of stochastic fluid dynamics, see, e.g., \cite{ArChCr2012}.
This difference in the definitions between the stochastic vector fields treated here and the usual notation in the SDE setting for stochastic fluids is necessary for our purposes here, and it greatly facilitates the analysis. We want to combine geometric mechanics with stochastic analysis for fluids. However, the usual flow relationships between Lagrangian particle maps and Eulerian fluid velocities are problematic in the stochastic setting, since expressions involving tangent vectors to stochastic processes are meaningless. Therefore, we shall develop a theory \emph{entirely} within the Eulerian interpretation of fluid dynamics. That is, all the variables discussed below will depend parametrically on the spatial coordinate $x$. 
\end{remark}

At this point, we have introduced Stratonovich stochasticity into the action principle for fluids in \eqref{SVP1} through the constraint that advected quantities $q$ should evolve by following the Stratonovich perturbed vector field. This advection law is formulated as a Lie derivative with respect to the Stratonovich stochastic vector field in \eqref{vel-vdt}.
For mathematical discussions of Lie derivatives with respect to stochastic vector fields, see \cite{ChCr2013, CiCr1999, Dr1999, IkWa1981, KaSh1994}. 

The definition of Lie derivative we shall use here is the standard Cartan definition in terms of the action of the differential operator ${\rm d}$ acting on functions of the spatial coordinate $x$,
\begin{equation}
\pounds_{dx_t} q = {\rm d}(i_{dx_t}q) + i_{dx_t}{\rm d}q
\,.\label{LieDeriv-def}
\end{equation}
where $i_{dx_t}{\rm d}q$ denotes insertion of the vector field $dx_t(x)$ into the differential form ${\rm d}q(x)$, with the usual rules for exterior calculus. For a review of exterior calculus in the context of fluid dynamics, see, e.g., \cite{Ho2011}. 

One may interpret $dx_t(x)$ in \eqref{vel-vdt} as the decomposition of a vector field defined at position $x$ and time $t$ into a time-dependent drift velocity $u(x,t)$ and a stochastic vector field. The time-independent quantities $\xi_i(x)$ with $i=1,2,\dots,K$ in the cylindrical stochastic process are usually interpreted as ``diffusivities" of the  stochastic vector field, and the choice of these quantities must somehow be specified from the physics of the problem to be considered. 
Here, we will specify the diffusivities $\xi_i(x)$ as a set of preassigned physical spatial correlations for the stochasticity. This information is to be provided during the formulation of the problem under consideration. As an example, we will  interpret the $\xi_i(x)$ with $i=1,2,\dots,K$ as spatial correlations obtained from, say, coarse-grained observations or computations (e.g., as PODs) which supply the needed information for $K$ independent Wiener (Brownian) processes, $dW_i(t)$, in the Stratonovich sense. Note that the number of vector fields $K$ need not be equal to the number of spatial dimensions. 

The $L^2$ pairing $\left\langle\,\cdot\,,\,\cdot\,\right\rangle_V$ in the Stochastic Variational Principle (SVP) written in \eqref{SVP1} with  Lagrange multiplier $p\in T^*V$ enforces the \emph{advection condition} that the quantity $q\in V$ is preserved along the Stratonovich stochastic path.
\eqref{vel-vdt}, namely,
\begin{equation}
{dq} + \pounds_{dx_t} q = 0
\,.\label{advec-cond}
\end{equation}
The advection relation \eqref{advec-cond} for the quantities $q\in V$ may be regarded as a \emph{stochastic constraint} imposed on the variational principle \eqref{SVP1} via the Lagrange multiplier $p$. Requiring that the solution to \eqref{advec-cond} exists locally in time amounts to assuming that the ``back-to-labels'' map for the solution of \eqref{vel-vdt} exists locally in time for the flow generated by the vector field \eqref{vel-vdt}, cf. \cite{CoIy2008, Ey2010}. 

An interesting physical situation occurs when numerical and observational data are available for comparison with the dynamics of the stochastic model. 
This is the situation in which the standard method of Proper Orthogonal Decompositions (PODs) may become useful. In particular, one may compute the PODs corresponding to the dominant correlations in this numerical and observational data, then select among those PODs according to well-defined physical criteria the modes corresponding to $\xi_i(x)$ with $i=1,2,\dots,K$ for finite $K$. In that case, the stochastic terms in equation \eqref{vel-vdt} represent a specific finite set of spatially-correlated, but unresolved, degrees of freedom represented by noise, that are coupled nonlinearly to the deterministic evolution of velocity $u$ and advected quantity $q$ through the Lagrange-to-Euler stochastic tangent map in equation \eqref{vel-vdt}. For example, one may choose the PODs $\xi_i(x)$ with $i=1,2,\dots,K$ as the first $K$ eigenfunctions of the correlation tensor for the observed or simulated velocity fields determined by \cite{BeHoLu1993}
\begin{align}
\int \big\langle u(x) u(x') \big\rangle \xi_i(x')\,dx' = \lambda_i^2 \xi_i(x)
\quad\hbox{no sum on }i=1,2,\dots,K
\quad\hbox{with }\lambda_1>\lambda_2>\dots>\lambda_K
\,,
\label{PODs-def}
\end{align}
so that these first $K$ eigenfunctions represent the highest correlations of velocity and thus account for the greatest fraction of kinetic energy in the data, compared to any other set of the same dimension. (The eigenvalues $\lambda_i^2$ are all positive, because of their interpretation as the relative kinetic energies of the eigenfunctions $\xi_i(x)$.)   Naturally, one might want to weight the eigenfunctions in equation \eqref{vel-vdt} as $\xi_i(x)\to \lambda_i \xi_i(x)$ (the positive square roots $ \lambda_i$ of their relative kinetic energy eigenvalues  $\lambda_i^2$), to give them the correct dimensional meaning (velocity) and relative importance.

\begin{remark}\rm
Of course, the strategy of dividing the solution into essential and nonessential modes, then replacing the dynamical effects of the nonessential modes by noise is well known. For example, one may refer to Majda and Timofeyev \cite{MaTiVa2003} for the history and motivation for this approach, as well as descriptions of their own approach, called Stochastic Mode Reduction (SMR). Some versions of SMR can be based, for example, on truncating the Fourier representation of the numerical method at a certain level that defines the essential modes, then stochastically modelling the non-linear interactions among the remaining modes, deemed to be nonessential.
\end{remark}
\paragraph{Main results.} The SPDEs which will result from the stochastically constrained variational principle $\delta S = 0$ for $S$ defined in \eqref{SVP1} are expressed in Stratonovich form in terms of the Lie-derivative operation $\pounds_{dx_t}$ as 
\begin{align}
d \frac{\delta  \ell}{\delta  u} 
+ \pounds_{dx_t} \frac{\delta  \ell}{\delta  u}
- \frac{\delta  \ell}{\delta  q}\diamond q\,dt
= 
0
\,,\quad\hbox{and}\quad
dq + \pounds_{dx_t}q 
= 0
\,,
\label{FSPDEs-Stratonovich}
\end{align}
in which $dx_t$ is the Eulerian vector field in equation \eqref{vel-vdt} for the velocity along the Lagrangian Stratonovich stochastic path 
and the diamond operation $(\diamond)$ will be explained below in Definition \ref{diamond-def}. 

The corresponding It\^o forms of these equations are
\begin{align}
\begin{split}
d \frac{\delta  \ell}{\delta  u}  
+ \pounds_{d\widehat{x}_t} \frac{\delta  \ell}{\delta  u} 
- \frac{\delta \ell}{\delta q}\diamond q\,dt 
&=
\frac12 \sum_{i,j} \pounds_{\xi_j(x)}
\left(\pounds_{\xi_i(x)} \frac{\delta  \ell}{\delta  u} \right)  \,[dW_i(t),dW_j(t)]
\,,\\
dq + \pounds_{d\widehat{x}_t}q 
&= 
\frac12 \sum_{i,j} \pounds_{\xi_j(x)}(\pounds_{\xi_i(x)}q)  \,[dW_i(t),dW_j(t)]
\,,
\end{split}
\label{FSPDEs-Ito}
\end{align}
in which the Eulerian vector field $d\widehat{x}_t$ tangent to the Lagrangian It\^o stochastic path $\widehat{x}_t$ is given by
\begin{align}
d\widehat{x}_t
= u(x,t)\,dt - \sum_i \xi_i(x) \,dW_i(t) 
\,.\label{vel-ydt}
\end{align}
In equation \eqref{FSPDEs-Ito}, the quantities $[dW_i(t),dW_j(t)]$ with $i,j=1,2,\dots,K$ for $K$ independent stochastic processes denote the quadratic covariations of the temporal It\^o noise. For Brownian processes, these quantities satisfy $[dW_i(t),dW_j(t)]=0$ unless $i=j$ and satisfy $[dW_j(t),dW_j(t)]=dt$ (no sum) \cite{Sh1996}. Hereafter, in choosing Brownian processes, we may write $[dW_i(t),dW_j(t)]=\delta_{ij}dt$.

\paragraph{Interpretations of the main results as Kelvin circulation theorems for incompressible flow.} The interpretations of the equations \eqref{FSPDEs-Stratonovich}--\eqref{FSPDEs-Ito} may be expressed quite succinctly in the case of incompressible flow. In that case, volume elements are preserved and the advected variables $q$ are absent. The corresponding Kelvin circulation theorems about the evolution of the integral of the \emph{circulation 1-form} $(\delta  \ell/\delta  u)$ around a material loop are, as follows.

\noindent
For the Stratonovich case,
\begin{align}
d \oint_{c(t)} \frac{\delta  \ell}{\delta  u} = 0
\quad\hbox{for loops $c(t)$ governed by}\quad dc(t) = -\,\pounds_{dx_t}c(t)\,,
\label{Kel-Strat-intro}
\end{align}
and equivalently  for the It\^o case,
\begin{align}
d \oint_{\widehat{c}(t)} \frac{\delta  \ell}{\delta  u} = 
\oint_{\widehat{c}(t)}
\frac12 \sum_{i,j} \pounds_{\xi_j(x)}
\left(\pounds_{\xi_i(x)} \frac{\delta  \ell}{\delta  u} \right)  \,\delta_{ij}\,dt
\quad\hbox{for}\quad d\widehat{c}(t) = -\,\pounds_{d\widehat{x}_t}\widehat{c}(t)\,.
\label{Kel-Ito-intro}
\end{align}
Thus, the Kelvin theorem in \eqref{Kel-Strat-intro} shows that circulation is \emph{conserved} for loops moving along the Stratonovich stochastic path with velocity vector field $dx_t$ in equation \eqref{vel-vdt}. However, perhaps not surprisingly, because the velocities of the loops are different, the equivalent Kelvin theorem \eqref{Kel-Ito-intro} shows that the Stratonovich circulation law is \emph{masked} in the It\^o formulation for loops moving along the  It\^o stochastic path  with velocity $d\widehat{x}_t$ in equation \eqref{vel-ydt}, because the It\^o terms cannot be expressed as a single Lie derivative of the circulation 1-form. Specifically, the circulation created in these It\^o loops in \eqref{Kel-Ito-intro} is determined from the spatial correlation vector fields $\xi_i(x)$.

The proofs of these Kelvin circulation theorems are straightforward. For example, equation \eqref{Kel-Ito-intro} is proven, as follows,
\begin{align}
\frac{d}{dt} \oint_{\widehat{c}(t)} \frac{\delta  \ell}{\delta  u} 
= 
\oint_{\widehat{c}(t)} 
\big(\partial_t + \pounds_{d\widehat{x}_t}\big) \frac{\delta  \ell}{\delta  u} 
=
 \frac12
 \oint_{\widehat{c}(t)} 
\sum_{j} \pounds_{\xi_j(x)}
\left(\pounds_{\xi_j(x)} \frac{\delta  \ell}{\delta  u} \right)
\,,
\label{Kel-xIto-intro-proof}
\end{align}
in which the last step is made by referring to equation \eqref{FSPDEs-Ito} for the case when the advected quantities $q$ are absent.

\begin{remark}\rm
When the noise is completely uncorrelated in each spatial dimension so that $\xi_i=const$, for $i=1,2,3$, and also provided $q=1$ (volume preservation), the double Lie-derivative operator $\sum_i \pounds_{\xi_i}(\pounds_{\xi_i}\,\cdot\,)$ appearing in equations \eqref{FSPDEs-Ito},  \eqref{Kel-Ito-intro} and \eqref{Kel-xIto-intro-proof} reduces to the metric Laplacian operator, $\Delta={\rm div \, grad}$. Some foundational results on related SPDEs can be found in Flandoli \cite{Fl2011,FlMaNe2014} and references therein.
\end{remark}

\begin{remark}\rm
It remains to define the diamond operation $(\diamond)$ appearing in equations \eqref{FSPDEs-Stratonovich} and \eqref{FSPDEs-Ito}. The diamond operation $(\diamond)$ appears when we include potential energy terms depending on the advected variables $q$ in the stochastically constrained variational principle in \eqref{SVP1}.
\end{remark}

\begin{definition}[The diamond operation]\label{diamond-def}\rm
On a manifold $M$, the diamond operation $(\diamond): T^*V\to \mathfrak{X}^*$ is defined for a vector space $V$ with $(q,p)\in T^*V$ and vector field $\xi\in \mathfrak{X}$ is given in terms of the Lie-derivative operation $\pounds_u$ by 
\begin{equation}
\left\langle  p\diamond q\,,\,\xi \,\right\rangle_\mathfrak{X}
:=
\left\langle  p\,,\,- \pounds_\xi q\,\right\rangle_V 
\label{diamond-def-}
\end{equation}
for the pairings $\langle  \,\cdot\,,\,\cdot\,\rangle_V: T^*V\times TV\to \mathbb{R}$ and
$\langle  \,\cdot\,,\,\cdot\,\rangle_\mathfrak{X}: \mathfrak{X}^*\times \mathfrak{X}\to \mathbb{R}$
with $p\diamond q\in \mathfrak{X}^*$. 
\end{definition}

\begin{remark}[Momentum map]\label{momap-def}\rm
The quantity $J(q,p)=p\diamond q$ in \eqref{diamond-def-} defines the \emph{cotangent-lift momentum map} for the action of the vector fields $\xi\in \mathfrak{X}$ on the vector space $V$ \cite{HoMaRa1998}. In terms of the momentum map $p\diamond q$, the action integral $S$ in \eqref{SVP1} for the stochastic variational principle $\delta S = 0$ may be written \emph{equivalently} as
\begin{align}
S(u,p,q)  &=
\underbrace{
\int \bigg( \ell(u,q) + \left\langle  p\,,\,\frac{dq}{dt} + \pounds_u q\,\right\rangle_V \bigg)dt
}_{\hbox{Lebesgue integral}}
+
\underbrace{
\int \sum_{i}  \left\langle p\diamond q \,,\,\xi_i(x)
\right\rangle_{\mathfrak{X}}\circ dW_i(t)
}_{\hbox{Stratonovich integral}}
\,.\label{SVP2}
\end{align}
Thus, the vector-field coupling between the deterministic and stochastic parts of the SVP  in this equivalent form in \eqref{SVP2} of the action integral \eqref{SVP1} is through the momentum map in \eqref{diamond-def-}. In finite dimensions, formula \eqref{SVP2} fits within the framework of \cite{Bi1981,LaCa-Or2008} for stochastic canonical Hamilton equations and generalises the work of \cite{BR-O2009} for stochastic variational integrators applied to the rigid body. This sort of momentum-map coupling in finite dimensions was also noted for symmetry reduction of mechanical systems with stochastic nonholonomic constraints studied in \cite{HoRa2012}. 
\end{remark}

\begin{remark}[Variational derivations of the Navier--Stokes equations from stochastic equations] \rm
The derivation of the Navier--Stokes equations in the context of stochastic processes has a long and well-known history. See. e.g., Constantin and Iyer \cite{CoIy2008}, Eyink \cite{Ey2010}, and references therein. Previous specifically variational treatments of stochastic fluid equations generally started from the famous remark by Arnold [1966] (about Euler's equations for the incompressible flow of an ideal fluid being geodesic for kinetic energy given by the $L^2$ norm of fluid velocity) and they have mainly treated It\^o noise in this context. For more discussion of these variational derivations of stochastic fluid equations and their relation to the Navier-Stokes equations, one should consult original sources such as, in chronological order, Inoue and Funaki \cite{InFu1979}, Rapoport \cite{Ra2000,Ra2002}, Gomes \cite{Go2005}, Cipriano and Cruzeiro \cite{CiCr2007}, Constantin and Iyer \cite{CoIy2008}, Eyink \cite{Ey2010}, Gliklikh \cite{Gl2010}, Arnaudon, Chen and Cruzeiro \cite{ArChCr2012}. We emphasise that the goal of the present work is to derive SPDEs for fluid dynamics by following the stochastic variational strategy outlined above.  It is not our intention to derive the Navier--Stokes equations in the present context. However, as mentioned previously, in imposing the stochastic constraint \eqref{advec-cond}, we have assumed the existence of a ``back-to-labels'' map. This assumption is also often made in the derivation of the the Navier--Stokes equations in a stochastic setting; see e.g., \cite{CoIy2008, Ey2010}.  For additional information, review and background references for random perturbations of PDEs and fluid dynamic models, viewed from complementary viewpoints to the present paper, see also Flandoli et al. \cite{Fl2011,FlMaNe2014}. In particular, Flandoli et al. \cite{Fl2011,FlMaNe2014} studies the interesting possibility that adding stochasticity can have a regularising effect on fluid equations which might otherwise be ill-posed. 
\end{remark}

\paragraph{Plan of the paper.} 
The remainder of the paper will derive the SPDEs for fluids in \eqref{FSPDEs-Stratonovich} and \eqref{FSPDEs-Ito} from the variational principle $\delta S=0$ with stochastic action integral $S$ given in \eqref{SVP1}, or equivalently  \eqref{SVP2}. Toward this objective, we shall take the following steps. 

Section \ref{StratVP-sec} will derive the Stratonovich motion equations for $S$ in  \eqref{SVP1} in both the Lagrangian and Hamiltonian formulations. 
Section \ref{Strat-Ito-subsec} will write the It\^o representation of these equations. 
Section \ref{KelThm-sec} will treat the abstract Kelvin circulation theorems in their Stratonovich \ref{KelThmStrat-subsec} and It\^o forms \ref{KelThmIto-subsec}.
Section \ref{Examples} will consider examples of stochastic fluid flows in the Stratonovich representation. Section \ref{SEP-incompress-subsec} will discuss Stratonovich stochastic fluid flows without advected quantities. In particular, Section \ref{SEP-incompress-subsec} will derive the Kelvin circulation theorem for Stratonovich stochastic fluids and verify their preservation of helicity (the linkage number for the vorticity field lines). In addition, Section \ref{SEP-incompress-subsec} will introduce Stratonovich stochasticity into the Euler-alpha model from turbulence theory. Section \ref{Diamond-operator} will derive the contributions of various advected quantities to the motion equation.  Section \ref{GFD-subsec} will treat the effects of these advected quantities in two examples of stochastic geophysical fluid dynamics (GFD). These two stochastic GFD examples comprise the stochastic Euler-Boussinesq equations and the stochastic quasigeostrophic (SQG) equations. In all of these examples, we will present both the Stratonovich and It\^o forms of the stochastic fluid equations and contrast their implications.
Section \ref{conclus-sec} will provide conclusions, further discussion, and outlook.

We will proceed formally without addressing technical issues of stochastic analysis, by assuming all the objects we introduce in the paper are semimartingales. This assumption is possible, because the parametric spatial dependence of the dynamical variables allows essentially finite-dimensional stochastic methods to be applied at each point of space. From that viewpoint, the paper presents a slight generalisation of earlier work by Bismut \cite{Bi1981}, L\'azaro-Cam\'i and Ortega \cite{LaCa-Or2008}, and Bou-Rabee and Owhadi \cite{BR-O2009}, which unifies their Hamiltonian and Lagrangian approaches to temporal stochastic dynamics, and extends them to the case of cylindrical noise in which spatial dependence is parametric, while temporal dependence is stochastic. Having made this assumption, we will be able to apply the normal rules of variational calculus to the Stratonovich integrals in Hamilton's principle \eqref{SVP1} to derive the equations of motion. We will then transform the equations to the It\^o side and derive the expected drift terms as generalizations of the second-order operators which first appeared in the original paper of Stratonovich \cite{St1966}. 
A general principle will given in Theorem \ref{SEP-thm} and results for a series of examples will be  derived. The present approach introduces new stochastic terms into these examples which improve the geometric structure of the equations and preserve the invariants of the underlying deterministic models. These new stochastic terms contain multiplicative noise depending on the gradients of the solution variables. 
We hope this paper will stimulate new research, not only by experts in geometric mechanics, but also by those who approach stochastic fluid dynamics using more analytical methods, because the equations we derive here are new and pose new challenges.


\section{Stratonovich stochastic variational principle}\label{StratVP-sec}

\subsection{Stochastic Euler-Poincar\'e (SEP) formulation}

\begin{theorem}[Stratonovich Stochastic Euler-Poincar\'e equations]\label{SEP-thm}\rm$\,$
The action for the stochastic variational principle $\delta S = 0$ in \eqref{SVP2}, 
\begin{align}
S(u,p,q)  &=
\underbrace{
\int \bigg( \ell(u,q) + \left\langle  p\,,\,\frac{dq}{dt} + \pounds_u q\,\right\rangle_V \bigg)dt
}_{\hbox{Lebesgue integral}}
+
\underbrace{
\int \sum_{i}  \left\langle p\diamond q \,,\,\xi_i(x)
\right\rangle_{\mathfrak{X}}\circ dW_i(t)
}_{\hbox{Stratonovich integral}}
\,,\label{SVP2-redux}
\end{align}
leads to the following Stratonovich form of the \emph{stochastic} Euler--Poincar\'e (SEP) equations
\begin{align}
dm + \pounds_{dx_t}m - \frac{\delta \ell}{\delta q}\diamond q\,dt 
=
0
\,,
\qquad
dq = -\,\pounds_{dx_t}q
\,,
\qquad
dp = \frac{\delta \ell}{\delta q}\,dt  + \pounds_{dx_t}^Tp
\,,
\label{SEP-eqns-thm}
\end{align}
where $dx_t \in \mathfrak{X}$ is the Stratonovich stochastic vector field  in equation \eqref{vel-vdt} and  \begin{align}
m:=\frac{\delta \ell}{\delta u} = p\diamond q \in \mathfrak{X}^*
\label{m-def-thm}
\end{align}
is the 1-form density of momentum. 
\end{theorem}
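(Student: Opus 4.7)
The plan is to apply the variational principle $\delta S = 0$ to \eqref{SVP2-redux} by taking independent, endpoint-vanishing variations in $u$, $p$, and $q$. This is legitimate because, in the Clebsch-style formulation used here, the advection of $q$ is enforced through the Lagrange multiplier $p$ rather than by restricting $\delta u$, so no reduced Euler--Poincar\'e constraint on $\delta u$ is needed. Moreover, by the semimartingale and cylindrical-noise hypotheses emphasised in the Introduction, the Stratonovich integrals obey the usual Leibniz rule and integration-by-parts formulas pointwise in $x$, so the variational calculation is formally identical to the deterministic case.

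Executing the three variations produces three Clebsch equations. (i) Since the Stratonovich integral in \eqref{SVP2-redux} does not depend on $u$, the variation $\delta u$ acts only on the Lebesgue part, and the identity $\langle p, \pounds_{\delta u} q\rangle_V = -\langle p\diamond q, \delta u\rangle_\mathfrak{X}$ gives $m := \delta\ell/\delta u = p\diamond q$. (ii) The variation $\delta p$ collects the Lebesgue contribution $\langle \delta p, (dq/dt) + \pounds_u q\rangle_V$ and the Stratonovich contribution $\langle \delta p \diamond q, \xi_i\rangle_\mathfrak{X} \circ dW_i = -\langle \delta p, \pounds_{\xi_i}q\rangle_V \circ dW_i$, which assemble via $dx_t = u\,dt - \sum_i \xi_i \circ dW_i$ into the stochastic advection equation $dq + \pounds_{dx_t}q = 0$. (iii) For $\delta q$, Stratonovich integration by parts (valid by the semimartingale assumption together with $\delta q|_{t_0,t_1}=0$) transfers $d/dt$ from $\delta q$ onto $p$, and transposing the Lie derivatives $\pounds_u$ and $\pounds_{\xi_i}$ onto $p$ in the pairings yields the costate equation $dp = (\delta\ell/\delta q)\,dt + \pounds_{dx_t}^T p$.

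The equation for $m$ then follows from the Stratonovich Leibniz rule applied to the bilinear momentum map $m = p \diamond q$,
\begin{equation*}
dm = dp \diamond q + p \diamond dq,
\end{equation*}
into which the equations for $dp$ and $dq$ are substituted. Collecting terms and invoking the identity
\begin{equation*}
(\pounds_{dx_t}^T p) \diamond q - p \diamond (\pounds_{dx_t} q) = -\pounds_{dx_t}(p \diamond q),
\end{equation*}
which is the infinitesimal equivariance of the cotangent-lift momentum map $J(q,p) = p\diamond q$ under the action of vector fields, immediately delivers $dm + \pounds_{dx_t}m - (\delta\ell/\delta q)\diamond q\,dt = 0$, as required.

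The main obstacle is conceptual rather than computational: one has to know that the ordinary calculus of variations can be applied to Stratonovich stochastic integrals. The semimartingale plus cylindrical-noise assumption discharges this, reducing the proof to a pointwise finite-dimensional manipulation, after which the only genuinely non-trivial algebraic step is to keep the tensor-type and sign conventions straight in the equivariance identity for $\diamond$ displayed above.
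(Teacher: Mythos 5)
Your proposal is correct and follows essentially the same route as the paper: the same three Clebsch variations in $u$, $p$, $q$, the same Leibniz expansion $dm = dp\diamond q + p\diamond dq$, and the same key identity $(\pounds_{dx_t}^T p)\diamond q - p\diamond(\pounds_{dx_t}q) = -\pounds_{dx_t}(p\diamond q)$, which the paper does not merely cite as momentum-map equivariance but verifies explicitly in its Lemma by pairing with an arbitrary $\eta\in\mathfrak{X}$ and manipulating the commutator of Lie derivatives. The signs and the overall logic check out.
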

\begin{proof}
The first step is to take the elementary variations of the action integral \eqref{SVP2-redux}, to find
\begin{align}
\delta u:\quad
\frac{\delta \ell}{\delta u} - p\diamond q = 0
\,,\qquad
\delta p:\quad
dq + \pounds_{dx_t} q   = 0
\,,\qquad
\delta q:\quad
\frac{\delta  \ell}{\delta  q}dt - dp 
+ \pounds_{dx_t}^Tp  = 0
\,.
\label{var-eqns-thm}
\end{align}
The first variational equation captures the relation \eqref{m-def-thm}, and the latter two equations in \eqref{var-eqns-thm} produce the corresponding equations in \eqref{SEP-eqns-thm}. The governing equation for $m$ in \eqref{SEP-eqns-thm} will be recovered by using the result of the following Lemma.
\begin{lemma}\rm\label{Lemma-m-eqn}
Together, the three equations in \eqref{var-eqns-thm} imply the first formula in \eqref{SEP-eqns-thm}, namely
\begin{align}
dm - \frac{\delta \ell}{\delta q} \diamond q \,dt = - \pounds_{dx_t} m
\,.\label{m-eqn-lem}
\end{align}
\begin{proof}
For an arbitrary $\eta\in \mathfrak{X}$, one computes the pairing
\begin{align}
\begin{split}
\left\langle 
dm - \frac{\delta \ell}{\delta q} \diamond q \,dt  \,,\, \eta 
\right\rangle_{\mathfrak{X}}
&=   
\left\langle 
- \frac{\delta \ell}{\delta q} \diamond q  + dp\diamond q + p\diamond dq\,,\, \eta 
\right\rangle_{\mathfrak{X}}
\\
\hbox{By equation \eqref{var-eqns-thm} } &=   
\left\langle 
(\pounds_{dx_t}^Tp) \diamond q 
- p\diamond \pounds_{dx_t} q\,,\, \eta 
\right\rangle_{\mathfrak{X}}
\\&=   
\left\langle 
p\,,\, (-\pounds_{dx_t} \pounds_{\eta} + \pounds_{\eta} \pounds_{dx_t} )q\,
\right\rangle_{V}
\\&=   
\left\langle 
p\,,\, {\rm ad}_{dx_t}{\eta}\,q\,
\right\rangle_{V}
=
-\left\langle 
p\diamond q\,,\, {\rm ad}_{dx_t}{\eta}\,
\right\rangle_{\mathfrak{X}}
\\&=
-\left\langle 
 {\rm ad}^*_{dx_t}(p\diamond q)\,,\,{\eta}\,
\right\rangle_{\mathfrak{X}}
=
-\,\Big\langle 
 \pounds_{dx_t}m\,,\,{\eta}\,
\Big\rangle_{\mathfrak{X}}\,.
\end{split}
\label{calc-lem}
\end{align}
Since $\eta\in \mathfrak{X}$ was arbitrary, the last line completes the proof of the Lemma. In the last step we have used the fact that coadjoint action is identical to Lie-derivative action for vector fields acting on 1-form densities. 
\end{proof}
\end{lemma}
In turn, the result of Lemma \ref{Lemma-m-eqn} now produces the $m$-equation in \eqref{SEP-eqns-thm} of Theorem \ref{SEP-thm}.
\end{proof}
 
\section{Stratonovich $\to$ It\^o equations}\label{Strat-Ito-mean-sec}

\subsection{Stratonovich form}

Since $dx_t = udt - \sum_i \xi_i(x)\circ dW_i(t)$, one may rewrite the Stratonovich Euler-Poincar\'e (SEP) equations \eqref{SEP-eqns-thm} in Theorem \ref{SEP-thm}, so as to separate out the various Lie derivative operations, as follows,
\begin{align}
\begin{split}
&dm + \pounds_{u}m\,dt - \frac{\delta \ell}{\delta q}\diamond q\,dt 
=
\sum_i \pounds_{\xi_i(x)}m \circ dW_i(t)
\,,\\&
dq + \pounds_{u}q\,dt = \sum_i \pounds_{\xi_i(x)}q \circ dW_i(t)
\,,\\&
dp - \pounds_{u}^Tp\,dt - \frac{\delta \ell}{\delta q}\,dt  
= -\sum_i \pounds_{\xi_i(x)}^Tp
\circ dW_i(t)
\,,\end{split}
\label{SEP-eqns-thm-again}
\end{align}
in which the stochastic process terms all appear on the right-hand sides of the equations.  When those terms are absent, one recovers standard deterministic ideal fluid dynamics. 

\subsection{It\^o form}\label{Strat-Ito-subsec}

The corresponding It\^o forms of the equations in \eqref{SEP-eqns-thm-again} are found by using It\^o's formula to identify the quadratic covariation terms as
\begin{align}
\begin{split}
dm + \pounds_{d\widehat{x}_t}m\,dt - \frac{\delta \ell}{\delta q}\diamond q\,dt 
&=
\frac12
\sum_{j} \pounds_{\xi_j(x)}\left(\pounds_{\xi_j(x)}m\right) \,dt
\,,\\
dq + \pounds_{d\widehat{x}_t}q\,dt 
&= 
\frac12
\sum_{j} \pounds_{\xi_j(x)}\left(\pounds_{\xi_j(x)}q\right)  \,dt
\,,\\
dp - \pounds_{d\widehat{x}_t}^Tp\,dt - \frac{\delta \ell}{\delta q}\,dt  
&= 
- \frac12
\sum_{j} \pounds_{\xi_j(x)}^T\left(\pounds_{\xi_j(x)}^Tp\right)  \, dt
\,,
\end{split}
\label{ItoEP-eqns}
\end{align}
where we have used $[dW_i(t),dW_j]=\delta_{ij}dt$ for Brownian motion and rearranged as shown in \eqref{Ito-comp} below,  in order to rewrite the Lie derivatives in terms of the stochastic It\^o vector field, $d\widehat{x}_t $, given in equation \eqref{vel-ydt}.

\begin{remark}\rm
The right hand sides in the It\^o stochastic equations \eqref{ItoEP-eqns} define a second-order  operator via double Lie derivatives with respect to the POD vector fields, ${\xi_j(x)}$, $j=1,2,\dots,K$. This seems like a type of the Laplace operator based on double Lie derivatives. To follow this intuition, we first introduce a \emph{Lie-dual} $\delta_{\xi_j}$ to the exterior derivative ${\MM d}$, defined by the differential form operations
\begin{align}
\delta_{\xi_j} q:
= 
{\xi_j}\contract ({\MM d}\  ({\xi_j}\contract q )) 
\quad\hbox{or, in other notation,}\quad
\delta_{\xi_j} q:
=\iota_{\xi_j}{\MM d} (\iota_{\xi_j} q)
\,, 
\label{Lie-dual-differential}
\end{align}
where ${\xi_j}\contract$ and $\iota_{\xi_j}$ denote two standard expressions for the operation of insertion of a vector field into an arbitrary differential $k$-form $q\in\Lambda^k$ and $\{\xi_j\}$ is taken as the given basis of POD vector fields associated with the spatial correlations of the cylindrical noise, enumerated in increasing order of eigenvalue by the subscript $j$.
In terms of the Lie-dual operation $\delta_{\xi_j}$ and the exterior differential operator $\MM d$,  we define the \emph{Lie-Laplacian} operator $\Delta_{Lie}$ by  
\begin{align}
\sum_{j} \pounds_{\xi_j}(\pounds_{\xi_j}q)
=
\sum_{j}  (\delta_{\xi_j}{\MM d} + {\MM d}\,\delta_{\xi_j} )q
=:
\Delta_{Lie} q
\,.
\label{LieLaplacian}
\end{align}
\end{remark}

\begin{proposition}\rm\label{d-Delta-comm-rel-prop}
The Lie Laplacian operator $\Delta_{Lie}$ commutes with the exterior differential operator $\MM d$. That is, 
\begin{align}
[\Delta_{Lie},\MM d] = \Delta_{Lie}\MM d - \MM d \Delta_{Lie} = 0\,.
\label{d-Delta-comm-rel-eqn}
\end{align}
\end{proposition}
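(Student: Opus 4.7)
The plan is to reduce the claim to the standard naturality identity $[\pounds_X, {\MM d}] = 0$, valid for any smooth vector field $X$. I would first recall Cartan's magic formula $\pounds_X = \iota_X {\MM d} + {\MM d}\,\iota_X$ (already invoked in \eqref{LieDeriv-def}) and observe that, because ${\MM d}^2 = 0$, both $\pounds_X {\MM d}$ and ${\MM d}\,\pounds_X$ collapse to the common expression ${\MM d}\,\iota_X\,{\MM d}$. Hence $\pounds_X$ and ${\MM d}$ commute on every form.

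Applying this commutation twice with $X = \xi_j$ on an arbitrary form $q$ gives
\[
\pounds_{\xi_j}\bigl(\pounds_{\xi_j}{\MM d}q\bigr)
= \pounds_{\xi_j}\bigl({\MM d}\,\pounds_{\xi_j}q\bigr)
= {\MM d}\bigl(\pounds_{\xi_j}\pounds_{\xi_j}q\bigr)\,,
\]
and summing over $j$, together with the definition \eqref{LieLaplacian} of $\Delta_{Lie}$, yields $\Delta_{Lie}{\MM d}q = {\MM d}\,\Delta_{Lie}q$, which is precisely \eqref{d-Delta-comm-rel-eqn}.

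An equivalent check, perhaps more transparent in view of the Hodge-like presentation $\Delta_{Lie} = \sum_j(\delta_{\xi_j}{\MM d} + {\MM d}\,\delta_{\xi_j})$, is to compute the commutator with ${\MM d}$ termwise: using ${\MM d}^2 = 0$ one finds $[\delta_{\xi_j}{\MM d},{\MM d}] = -{\MM d}\,\delta_{\xi_j}{\MM d}$ and $[{\MM d}\,\delta_{\xi_j},{\MM d}] = {\MM d}\,\delta_{\xi_j}{\MM d}$, which cancel pairwise. There is no substantive obstacle: the identity is purely algebraic on differential forms, and convergence of the (possibly infinite) sum over POD modes does not enter the argument, consistent with the paper's formal stance of treating all objects as semimartingales without addressing analytical technicalities.
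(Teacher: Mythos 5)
Your proof is correct and takes essentially the same route as the paper: the paper's (very terse) proof invokes exactly the two arguments you give, namely the naturality identity $[\pounds_X,{\MM d}]=0$ applied twice to the double-Lie-derivative form of $\Delta_{Lie}$, and the termwise cancellation via ${\MM d}^2=0$ in the Hodge-like presentation \eqref{LieLaplacian}. You have simply supplied the algebraic details that the paper leaves implicit.
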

\begin{proof}
This commutation relation follows immediately from the definition of the Lie Laplacian operator $\Delta_{Lie}$ in \eqref{LieLaplacian} and the property of the exterior differential that $\MM d^2q = 0 $ when acting on a differentiable $k$-form $q$. It also follows from the definition \eqref{LieLaplacian} because Lie derivatives commute with exterior derivatives. 
\end{proof}
\begin{remark}[Derivation of the Lie Laplacian operator in equation \eqref{ItoEP-eqns}]$\,$\rm

Let us seek the It\^o form of the stochastic part of the Stratonovich evolution in \eqref{SEP-eqns-thm-again}, ignoring drift. The stochastic part of the process in \eqref{SEP-eqns-thm-again} can be written as a linear differential operator 
\begin{align}
dq(x,t) = \pounds_{\xi (x)}q(x,t) \circ dW(t)
\,,
\label{SEP-eqns-thm-abbrev}
\end{align}
in which we have ignored the drift term and suppressed indices on the vector fields ${\xi_i(x)}$ for simplicity of notation. Upon pairing this equation with a \emph{time independent} test function $\phi(x)\in V^*$, we find the weak form of the Stratonovich advection equation \eqref{SEP-eqns-thm-abbrev},
\begin{align}
d \left\langle 
\phi(x)\,,\, q
\right\rangle_V 
= 
\left\langle 
\phi(x)\,,\, dq
\right\rangle_V
=
 \left\langle 
\phi(x)\,,\, \pounds_{\xi (x)}q \circ dW(t)
\right\rangle_V 
= 
 \left\langle 
\pounds_{\xi (x)}^T  \phi(x)\,,\, q
\right\rangle_V \circ dW(t)
\,,
\label{weak-strat-vection}
\end{align}
where we have taken advantage of the parametric $x$-dependence in $\phi(x)$ to pass the evolution operation $d$ through it in the first step. 
Here, $\langle \,\phi\,,\,q\,\rangle_V=\int <\phi(x),q(x,t)>dx$ denotes $L^2$ integral pairing of a quantity $q\in V$ with its tensor dual $\phi\in V^*$ in the domain of flow, the symbol $\pounds_{\xi (x)}^T $ is the $L^2$ adjoint of the Lie derivative $\pounds_{\xi (x)}$, and $<\cdot\,,\,\cdot>$ denotes pairing between elements of $V$ and their dual elements in $V^*$ at each point $x$ in the domain.
The corresponding weak form of the It\^o evolution is
\begin{align}
\begin{split}
d \left\langle 
\phi\,,\, q
\right\rangle_V 
&= 
 \left\langle 
\pounds_{\xi (x)}^T  \phi\,,\, q
\right\rangle_V  dW(t)
+
\frac12  \left[
 d \left\langle 
\pounds_{\xi (x)}^T  \phi\,,\, q
\right\rangle_V  
\,,\, dW(t)
\right]
\\
\hbox{By equation \eqref{weak-strat-vection} }&=
\left\langle 
\pounds_{\xi (x)}^T  \phi\,,\, q
\right\rangle_V  dW(t)
+
\frac12  \left[
 \left\langle 
\pounds_{\xi (x)}^T(\pounds_{\xi (x)}^T  \phi)\,,\, q
\right\rangle_V  
dW(t)\,,\, dW(t)
\right]
\\&=
 \left\langle 
  \phi\,,\, \pounds_{\xi (x)} q
\right\rangle_V  dW(t)
+
\frac12  \left[
 \left\langle 
\phi \,,\, \pounds_{\xi (x)} (\pounds_{\xi (x)} q)
\right\rangle_V  
dW(t)\,,\, dW(t)
\right]
\\&=
 \left\langle 
  \phi\,,\, \pounds_{\xi (x)} q
\right\rangle_V  dW(t)
+
\frac12  
 \left\langle 
\phi \,,\, \pounds_{\xi (x)} (\pounds_{\xi (x)} q)
\right\rangle_V  dt
\,,\end{split}
\label{Ito-comp}
\end{align}
Hence, because $\phi(x)\in V^*$ was chosen arbitrarily, the It\^o form of the Stratonovich stochastic advection equation for $q\in V$ in \eqref{SEP-eqns-thm-again} is
\begin{align}
dq + \pounds_{u}q\,dt  = \sum_i (\pounds_{\xi_i  (x)} q) dW_i (t) 
+ \frac12 \sum_{i,j}  \pounds_{\xi_j  (x)} (\pounds_{\xi_i  (x)} q)\,[dW(t)_i\,,\, dW_j(t)]
\,.
\label{Ito-form}
\end{align}
For Brownian motion, the last term in \eqref{Ito-form} simplifies via $[dW_i(t),dW_j]=\delta_{ij}dt$, and the middle equation in \eqref{ItoEP-eqns} emerges. Note that the last term in \eqref{Ito-form} (the quadratic It\^o term) cannot be written as a Lie derivative of the Stratonovich-advected quantity $q$. Instead, it is a \emph{double} Lie derivative, and this has the effect of masking the interpretation of the $q$-evolution in It\^o form as advection. 

\end{remark}

\section{Abstract Kelvin theorem}\label{KelThm-sec}

\subsection{Stratonovich circulation theorem}\label{KelThmStrat-subsec}

Next, we shall define the {\bfi circulation map\/}
$\mathcal K: {\mathcal C} \times V ^\ast
\rightarrow \mathfrak{X}(\mathcal{D})^{\ast\ast}$, where ${\mathcal C}$ is a space of \emph{material loops}, for which $c \in {\mathcal C}$ satisfies 
\begin{equation}
dc(t) = -\,\pounds_{dx_t}c(t)
\,.\label{c-eqn}
\end{equation}
Given a 1-form density $m \in \mathfrak{X} ^\ast $ we can
create a 1-form (no longer a density) by dividing it by the mass
density, $D$. We denote the result just by $m  / D$.
We let the circulation map $ {\mathcal K} $ then be defined by the integral of the 1-form $m/D$ around a loop moving with the Stratonovich flow,
\begin{equation}
 \left\langle {\mathcal K} (c(t) , q(t)) , m
\right\rangle  = \oint _{c(t)} \frac{ m }{ D } \, .
\label{circ.map}
\end{equation}
The expression in this definition is called the {\bfi circulation\/}
of the one-form $m/D$ around the loop $c(t)$ moving along a Stratonovich stochastic path $x_t(x_0)$  with \emph{Eulerian} velocity vector field $dx_t$ given as in equation \eqref{vel-vdt},
\begin{equation}
dx_t = u(x,t)\,dt - \sum_i \xi_i(x)\circ dW_i(t) \,.
\label{stochpathdiff}
\end{equation}
Consider the stochastic flow equation for the loop integral $\oint_{c(t)} m/D $. 
The differential of this loop integral is the total differential of the integrand, since the loop $c(t)$ itself is moving with the stochastic flow as in \eqref{c-eqn}. Consequently, we find
\begin{align}
\begin{split}
d \oint_{c(t)} m/D &= \oint_{c(t)} (d+ \pounds_{dx_t}) (m/D) 
= \oint_{c(t)} \frac{1}{D}\frac{\delta \ell}{\delta q}\diamond q \,dt
\,,\end{split}
\label{KN-loop}
\end{align}
where we have used equation \eqref{SEP-eqns-thm} in Theorem \ref{SEP-thm}  as well as the advection relation for density $D$,
\[
dD = -\,\pounds_{dx_t }D
=
-\,\pounds_{u(x,t)}D\,dt + \sum_i \pounds_{\xi_i(x)}D\circ dW_i(t)
\,.
\]
This calculation has proven the following theorem for the {\bfi Kelvin-Noether circulation map}
$I : {\mathcal C} \times V ^\ast  \times \mathfrak{X}
\rightarrow \mathbb{R}$ defined by
\begin{equation}
I(t) = I(c, q,u) := 
\left\langle{\mathcal K} (c, q), \frac{ \delta l}{\delta u}( u , q) \right\rangle.
\label{KelvinNoetherQty}
\end{equation}

\begin{theorem}[Abstract Kelvin-Noether theorem for Stratonovich
Euler--Poincar\'e SPDEs]\rm \label{KelvinNoetherSPDEthm}$\,$

For $c(t) \in {\mathcal C}$, let $u(t), q(t)$ satisfy the reduced Stratonovich
Euler--Poincar\'e SPDEs in \eqref{SEP-eqns-thm} in Theorem \ref{SEP-thm}. 
Choose the map  ${\mathcal K} : {\mathcal C} \times V ^\ast \rightarrow \mathfrak{X} ^{\ast \ast} $ given by integration around a loop $c(t)$ satisfying \eqref{c-eqn}.
Then the Kelvin-Noether circulation map in \eqref{KelvinNoetherQty} satisfies
\begin{equation}
dI(t) = \left\langle {\mathcal K}(t),
                     \frac{1}{D}\frac{\delta l}{\delta q} \diamond q \right\rangle\,dt\,.
\label{KNthm-SPDEs}
\end{equation}
\end{theorem}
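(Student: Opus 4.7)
The plan is to unpack the Kelvin--Noether circulation using definition \eqref{circ.map} as $I(t) = \oint_{c(t)} m/D$, where $m = \delta\ell/\delta u$ is the momentum 1-form density and $D$ is the mass density (itself one of the advected quantities $q$), and then differentiate in $t$ by means of the Stratonovich transport theorem for loops moving with the vector field $dx_t$. Since the loop evolves by $dc(t) = -\,\pounds_{dx_t} c(t)$, the transport formula reads
\begin{equation*}
d\oint_{c(t)} \alpha \,=\, \oint_{c(t)} \bigl( d\alpha + \pounds_{dx_t} \alpha \bigr)
\end{equation*}
for any time-evolving 1-form $\alpha$, and I would apply it with $\alpha = m/D$. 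Because the paper assumes all dynamical variables are semimartingales, the usual Leibniz rule for Stratonovich differentials is available and no It\^o correction terms appear; the key point is that $m/D$ is a genuine 1-form (no longer a density), so that both the loop integral and its transport are geometrically well defined.

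The core step is then a one-line Leibniz calculation. I would expand
\begin{equation*}
d(m/D) \,=\, D^{-1}\, dm \,-\, m D^{-2}\, dD, \qquad \pounds_{dx_t}(m/D) \,=\, D^{-1}\pounds_{dx_t} m \,-\, m D^{-2}\, \pounds_{dx_t} D,
\end{equation*}
and substitute the Stratonovich Euler--Poincar\'e $m$-equation from Theorem \ref{SEP-thm}, $dm + \pounds_{dx_t} m = (\delta\ell/\delta q)\diamond q\, dt$, together with the advection law $dD + \pounds_{dx_t} D = 0$, which is the specialisation of the $q$-equation to the density. The Lie-derivative contributions then cancel pairwise, leaving
\begin{equation*}
(d + \pounds_{dx_t})(m/D) \,=\, \frac{1}{D}\,\frac{\delta\ell}{\delta q}\diamond q\,\, dt.
\end{equation*}
Inserting this into the transport formula and identifying the resulting line integral with the pairing $\langle \mathcal{K}(t), D^{-1}(\delta\ell/\delta q)\diamond q\rangle\, dt$ produces exactly the claimed identity for $dI(t)$.

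The hard part is not the algebra but justifying the Stratonovich transport theorem for a loop being carried by the random vector field $dx_t$. A fully rigorous derivation would pull the loop integral back to a fixed reference loop via the inverse (back-to-labels) map, apply the Stratonovich differential under the integral using the Leibniz property, and push forward; the admissibility of this step is precisely what the paper's standing semimartingale assumption and the postulated local existence of a back-to-labels map for $dx_t$ are designed to guarantee. Granted those assumptions, the transport theorem can be used formally as above and the proof is a direct Leibniz computation combined with the $m$- and $D$-equations.
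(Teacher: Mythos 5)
Your proposal is correct and follows essentially the same route as the paper: the paper's proof is precisely the transport identity $d\oint_{c(t)}(m/D)=\oint_{c(t)}(d+\pounds_{dx_t})(m/D)$ for loops satisfying $dc(t)=-\pounds_{dx_t}c(t)$, combined with the Leibniz rule, the Stratonovich $m$-equation from Theorem \ref{SEP-thm}, and the advection law $dD+\pounds_{dx_t}D=0$, exactly as you describe. Your closing remarks on justifying the transport step via the semimartingale assumption and the back-to-labels map match the paper's own (formal) stance on that point.
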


\begin{remark}\rm
Upon using the definition \eqref{circ.map} for the Kelvin-Noether quantity in equation \eqref{KelvinNoetherQty}, one recovers the explicit formula \eqref{KN-loop} from the abstract Kelvin-Noether theorem in \eqref{KNthm-SPDEs}.

In particular, in the case of incompressible flow where $q=D=1$ (volume preservation) and $\delta l/\delta q=0$ for $q\ne D$, then the Kelvin-Noether circulation map in \eqref{KelvinNoetherQty} is \emph{conserved}, since upon using the definitions of the circulation in \eqref{circ.map} and the diamond operation in \eqref{diamond-def-} we find,
\begin{equation}
d\oint_{c(t)} \frac{\delta \ell (u)}{\delta u} 
= \oint_{c(t)} (d+ \pounds_{dx_t}) \frac{\delta \ell (u)}{\delta u}
= 0\,,
\label{KNthm-StochEuler}
\end{equation}
for any co-moving loop $c(t)$, i.e., any loop satisfying equation \eqref{c-eqn}, with stochastic vector field $dx_t$ defined in \eqref{stochpathdiff}. 

\end{remark}

\subsection{It\^o circulation theorem}\label{KelThmIto-subsec}

As we have seen in equation \eqref{KN-loop}, the Kelvin circulation theorem for the Stratonovich case is,
\begin{align}
d \oint_{c(t)} \frac{1}{D}\frac{\delta  \ell}{\delta  u} 
= \oint_{c(t)}  \frac{1}{D}\frac{\delta \ell}{\delta q}\diamond q \,dt
\,,
\label{Kel-Strat}
\end{align}
for loops $c(t)$ satisfying equation \eqref{c-eqn} with the stochastic path velocity vector field $dx_t$  defined in \eqref{stochpathdiff}.

For the It\^o case, according to equation \eqref{ItoEP-eqns} this circulation law becomes
\begin{align}
d \oint_{\widehat{c}(t)} \frac{1}{D}\frac{\delta  \ell}{\delta  u}  
= \oint_{\widehat{c}(t)}  \frac{1}{D}\frac{\delta \ell}{\delta q}\diamond q
+\oint_{\widehat{c}(t)}
\frac12 \sum_{i,j} \pounds_{\xi_j(x)}
\left(\pounds_{\xi_i(x)} \frac{1}{D}\frac{\delta  \ell}{\delta  u} \right)  \,\delta_{ij}\,dt
\label{Kel-Ito}
\end{align}
for loops $\widehat{c}(t)$ following paths generated by flows of the It\^o stochastic vector field $d\widehat{x}_t$ given in equation \eqref{vel-ydt}.

Hence, the It\^o equations have extra sources of circulation, even in the absence of advected quantities, $q$.
In particular, the Kelvin-Noether result in Theorem \ref{KelvinNoetherSPDEthm} shows that circulation is conserved for loops moving along the stochastic Stratonovich path when the advected quantities $q$ are absent. However, not unexpectedly, because the velocities of the loops $c(t)$ and $\widehat{c}(t)$ are \emph{different}, the Kelvin-Noether theorem shows that circulation is conserved \emph{only} for loops moving along the stochastic Stratonovich path, $c(t)$, and this conservation is \emph{masked} in the It\^o representation, because it does not hold for loops moving along the stochastic It\^o path, $\widehat{c}(t)$, and the quadratic covariation terms cannot be expressed as a single Lie derivative operation. Thus, as viewed along the It\^o path, $\widehat{c}(t)$, misalignment of the correlation eigenvectors creates or destroys circulation. 

\section{Examples}\label{Examples}

\subsection{Stratonovich stochastic fluid flows without advected quantities}\label{SEP-incompress-subsec}

\subsubsection{Stratonovich stochastic  Euler--Poincar\'e flows}

The Stratonovich form of the \emph{stochastic} Euler--Poincar\'e equations \eqref{SEP-eqns-thm} in Theorem \ref{SEP-thm} is, in the absence of advected quantities, 
\begin{align}
dv + \pounds_{d{x}_t}v = -\MM d p\,dt
\,,
\label{SEP-Eul-alpha-eqns}
\end{align}
in which bold $\MM d$ denotes the \emph{differential} in the standard exterior calculus sense, while italic $d$ still denotes the stochastic change, including both the drift and the stochastic process.  Here the variational derivative $v=\frac{\delta \ell}{\delta u}\in\Lambda^1$ is the momentum 1-form dual to the velocity vector field. The divergence-free velocity vector field $u=\MM u\cdot\nabla\in\mathfrak{X}(\mathbb{R}^3)$ in the covariant basis defined by the spatial gradient $(\nabla)$ has vector components also written in bold as $\MM u\in \mathbb{R}^3$ which satisfy ${\rm div}\,\MM u=0$. 
In fact, we shall assume that ${\rm div}\,\MM{\xi}_i=0$ with $\MM{\xi}_i\in \mathbb{R}^3$, as well, so we may write the divergence of the stochastic path in equation \eqref{stochpathdiff} in vector form as
\begin{align}
{\rm div} (d\MM{x}_t)= 0
\,,\quad\hbox{with}\quad
d\MM{x}_t 
= 
\MM u(\MM x,t)\,dt - \sum_i \MM{\xi}_i(\MM x)\circ dW_i(t)
\,.\label{vel-vectordt}
\end{align}
This means the stochastic flow of the vector field $d\MM{x}_t$ preserves volume elements. In this case, we may define the variational derivative 1-form density $\frac{\delta \ell}{\delta u}$ in Eulerian spatial coordinates as simply the 1-form
\[
v := \frac{\delta \ell}{\delta u} = \MM {v\cdot dx}\,.
\]
The Lie derivative in equation \eqref{SEP-Eul-alpha-eqns} is then written in vector form,  so that  equation \eqref{SEP-Eul-alpha-eqns} becomes
\[
(d\MM {v})\,\cdot\, \MM {dx} + \pounds_{dx_t}(\MM {v}\cdot \MM{dx} )
=
\Big(d \MM v + d\MM{x}_t\cdot \nabla \MM v + (\nabla d\MM{x}_t)^T\cdot \MM v
\Big)\,\cdot\, \MM {dx}
= -\,\nabla p \,\cdot\, \MM {dx} \,dt
\,.\]
In three-dimensional vector form, this equation is
\begin{align}
\begin{split}
d \MM v + d\MM{x}_t\cdot \nabla \MM v + (\nabla d\MM{x}_t)^T\cdot \MM v
&= -\,\nabla p \,dt
\\&=
d \MM v - d\MM{x}_t\times{\rm curl}\, \MM v + \nabla (d\MM{x}_t\cdot \MM v)
\,.
\end{split}
\label{v-vec-eqn}
\end{align}
Taking the curl of equation \eqref{v-vec-eqn} yields the stochastic equation for  the vorticity $\MM {\omega} = {\rm curl}\,\MM v$
\begin{align}
d\MM\omega - {\rm curl}\,\Big(d\MM{x}_t\times \MM\omega\Big) = 0
\,.\label{vort-vec-eqn}
\end{align}
We define the \emph{vorticity flux} as a 2-form $\omega$ with basis area element $\MM{dS}$ as
\begin{align}
\omega := \MM{d}v 
= \MM{d}(\MM v \cdot \MM {dx} )
= ({\rm curl}\,\MM v) \cdot \MM{dS}
= \MM\omega \cdot \MM{dS}
\,.\label{vort-2form-eqn}
\end{align}
Consequently, we may write the vorticity vector equation \eqref{vort-vec-eqn} in geometric form as
\begin{align}
\begin{split}
&d\omega + \pounds_{d{x}_t}\omega = 0
\,.
\end{split}
\label{SEPvorticity-eqn}
\end{align}
This equation also follows directly from the exterior derivative of the geometric form of the motion equation in \eqref{SEP-Eul-alpha-eqns} by invoking commutation of the Lie derivative and the exterior derivative. It means the flux of vorticity $\MM\omega \cdot \MM{dS}_t$ through a surface element following the stochastic path $x_t(x)$ is invariant. For an alternative formulation of the stochastic fluid vorticity equations in terms of a nonlinear version of the Feynman-Kac formula, see \cite{CrQi2013}.

\begin{remark}[ {\bf It\^o stochastic Euler--Poincar\'e equations}]
\label{StratvIto-conservlaws}\rm
The It\^o forms of the motion equation in \eqref{SEP-Eul-alpha-eqns} and vorticity equation \eqref{SEPvorticity-eqn} for incompressible motion in the absence of advected quantities are given by
\begin{align}
\begin{split}
&dv + \pounds_{d\widehat{x}_t}v = -\,\MM d p 
+ \frac12
\sum_{i,j} \pounds_{\xi_j(x)}(\pounds_{\xi_i(x)}\,v)   \,\delta_{ij}\,dt
\,,\\
&d\omega + \pounds_{d\widehat{x}_t}\omega = 
\frac12
\sum_{i,j} \left(\pounds_{\xi_j(x)}(\pounds_{\xi_i(x)}\,\omega)\right)  \,\delta_{ij}\,dt
\,,
\end{split}
\label{ItoMotionVorticity-eqns}
\end{align}
where the vorticity 2-form $\omega=\MM\omega \cdot \MM{dS}$ is given in \eqref{vort-2form-eqn}, and we have used commutation of exterior derivative $\MM d$ and Lie derivative $\pounds_{\xi_j(x)}$ twice.

\end{remark}

\subsubsection{Helicity preservation for Stratonovich stochastic Euler--Poincar\'e (SEP) flows}
\begin{definition} [Helicity]\rm
The helicity $\Lambda[{\rm curl}\,\MM{v}]$ of a divergence-free  vector field ${\rm curl}\,\MM{v}$ that is tangent to the boundary $\partial D$ of a simply connected domain $D\in\mathbb{R}^3$ is defined as
\begin{equation} 
\Lambda[{\rm curl}\,\MM{v}]
= 
\int_D (\MM{v} \cdot {\rm curl}\,\MM{v})\,{d}^3x
\,,
\label{helicity-def}
\end{equation} 
where $\MM{v}$ is a divergence-free vector-potential for the field ${\rm curl}\,\MM{v}$ and ${d}^3x$ is the spatial volume element. 
\end{definition}
\begin{remark}\rm
The helicity of a vector field ${\rm curl}\,\MM{v}$ measures the average linking of its field lines, or their relative winding. Refer to \cite{ArKh1998,MoTs1992} for excellent historical surveys.
The helicity is unchanged by adding a gradient  to the vector $\MM{v}$, and ${\rm div}\,\MM{v}=0$ is not a restriction for simply connected domains in $\mathbb{R}^3$, provided ${\rm curl}\,\MM{v}$ is tangent to the boundary $\partial D$.
\end{remark}

The principal feature of this concept for Stochastic Euler flows is embodied in the following theorem.

\begin{theorem}[Stratonovich stochastic Euler flows preserve helicity]\label{helicitycons-thm}$\,$\rm

When homogeneous or periodic boundary conditions are imposed, Euler's equations for an ideal incompressible fluid flow preserves the helicity, defined as the volume integral 
\begin{equation} 
\Lambda[{\rm curl}\,\MM{v}]
= 
\int_D \MM{v} \cdot {\rm curl}\,\MM{v}\,{d}^3x
=
\int_D v\wedge  {d}v
\,,
\label{helicity-thm}
\end{equation} 
where $v=\delta\ell/\delta u = \MM{v} \cdot \MM {dx}$ is the circulation 1-form, 
$\MM {d}v = {\rm curl}\,\MM{v} \cdot \MM {dS}$ is the vorticity flux (a 2-form),
${\rm curl}\,\MM{v}=\MM {\omega}$ is the vorticity vector and ${d}^3x$ is the spatial volume element.
\end{theorem}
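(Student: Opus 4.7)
My plan is to compute the Stratonovich stochastic differential of the spatial integral $\Lambda(t) = \int_D v \wedge \omega$, where $\omega = \MM d v$, directly from the two evolution equations already at hand. The key structural advantage of the Stratonovich formulation is that the ordinary Leibniz and chain rules hold at each fixed spatial point $x\in D$, so product-rule manipulations on forms proceed exactly as in the deterministic case.

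First, from \eqref{SEP-Eul-alpha-eqns} and the vorticity equation \eqref{SEPvorticity-eqn} I have
\[
dv = -\,\pounds_{dx_t} v - \MM d p\, dt, \qquad d\omega = -\,\pounds_{dx_t} \omega.
\]
Applying the Stratonovich product rule and using that $\pounds_{dx_t}$ is a derivation on the exterior algebra,
\[
d(v\wedge \omega) = dv \wedge \omega + v \wedge d\omega = -\,\pounds_{dx_t}(v\wedge \omega) - \MM d p \wedge \omega\, dt.
\]
Because $\omega = \MM d v$ is exact and hence closed, $\MM d p\wedge \omega = \MM d(p\,\omega)$, so
\[
d(v\wedge \omega) = -\,\pounds_{dx_t}(v\wedge \omega) - \MM d(p\,\omega)\,dt.
\]

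Second, I integrate over $D$ and convert the two right-hand terms to boundary integrals via Stokes' theorem. Since $v\wedge \omega$ is a top ($3$-)form on $D\subset \mathbb{R}^3$, its exterior derivative vanishes, and Cartan's magic formula collapses to $\pounds_{dx_t}(v\wedge \omega) = \MM d\, i_{dx_t}(v\wedge \omega)$. Therefore
\[
d\Lambda(t) = -\int_{\partial D} i_{dx_t}(v\wedge \omega) - \left(\int_{\partial D} p\,\omega\right) dt.
\]
Under periodic boundary conditions both boundary integrals vanish automatically. Under homogeneous boundary conditions, the assumption that $\MM u$ and each $\MM \xi_i$ (and hence $dx_t$) are tangent to $\partial D$ makes $(i_{dx_t}(v\wedge \omega))|_{\partial D}$ the contraction of a tangent vector with the pullback of a 3-form to a 2-surface, which vanishes for dimensional reasons; and the stipulation that ${\rm curl}\,\MM v$ be tangent to $\partial D$ (implicit in the definition \eqref{helicity-def}) kills $\omega|_{\partial D}$, which disposes of the second integral. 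Thus $d\Lambda(t)=0$ pathwise, as desired.

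The main obstacle is interpretive rather than algebraic: I must justify that the Stratonovich product rule is legitimate for the wedge product of the two form-valued stochastic processes $v$ and $\omega$. This follows from the paper's standing assumption that all dynamical variables are semimartingales, combined with the parametric nature of the spatial dependence --- at each fixed $x\in D$, $v(x,\cdot)$ and $\omega(x,\cdot)$ are honest semimartingales in $t$, so Stratonovich calculus and its Leibniz rule apply pointwise. Once this is granted, the remainder is purely geometric bookkeeping: the derivation property of $\pounds_{dx_t}$, the closedness of $\omega$, and Stokes' theorem.
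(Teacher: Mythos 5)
Your proof is correct and follows essentially the same route as the paper's: both compute $\big(d+\pounds_{dx_t}\big)(v\wedge \MM{d}v) = -\,\MM{d}(p\,\MM{d}v)\,dt$ using the derivation property of the Lie derivative and closedness of $\omega=\MM{d}v$, then integrate and discard the resulting boundary terms under homogeneous or periodic conditions. The only differences are cosmetic --- you stay in exterior-calculus language via Cartan's formula where the paper passes to the vector divergence form, and you make explicit the (implicitly assumed) pointwise validity of the Stratonovich product rule.
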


\begin{proof}
Rewrite the geometric form of the Stochastic Euler equations \eqref{SEP-Eul-alpha-eqns} for rotating incompressible flow with unit mass density in terms of the circulation one-form $v:=\MM{v}\cdot \MM {dx}$ as 
 \begin{equation}
dv + \pounds_{dx_t}v = -\MM{d} p \,dt
\,.
\label{SEule-qns-1form}
\end{equation}
and $\pounds_{dx_t}\,{d}^3x = {\rm div} (d\MM{x}_t)\,{d}^3x =0$. 
Then the {\bfi helicity density}, defined as 
 \begin{equation}
 v\wedge \MM{d}v=(\MM{v}\cdot{\rm curl}\,\MM{v})\,{d}^3x
 =\lambda\,{d}^3x\,,
 \quad\hbox{with}\quad
 \lambda= \MM{v}\cdot{\rm curl}\,\MM{v}
 \,,
\label{helicity dens-def}
\end{equation}
obeys the dynamics it inherits from the Stochastic Euler equations,
 \begin{equation}
\big(d + \pounds_{dx_t}\big)(v\wedge \MM{d}v)
= - (\MM dp \wedge \MM{d}v)\,dt - (v\wedge \MM d^2p)\,dt = - (\MM d (p\, \MM{d}v))\,dt
\,,
\label{helicity dens-eqn-Lie}
\end{equation}
after using $\MM d^2p=0$ and $\MM d^2v=0$.
In vector form, this result may be expressed as a conservation law,
 \begin{equation}
\big(d \lambda + {\rm div}\, \lambda d\MM{x}_t\big)\,{d}^3x
=
-\,{\rm div}(p\,{\rm curl}\,\MM{v})\,{d}^3x\,dt
\,.
\label{helicity dens-eqn-vec}
\end{equation}
Consequently, the time derivative of the integrated helicity in a domain $D$ obeys
 \begin{eqnarray}
d\Lambda[{\rm curl}\,\MM{v}]
 &=& 
 \int_D (d\lambda)\,{d}^3x
 = 
- \int_D {\rm div}( \lambda d\MM{x}_t + p\,{\rm curl}\,\MM{v}\,dt)\,{d}^3x  
\nonumber\\
&=& 
- \oint_{\partial D}( \lambda d\MM{x}_t + p\,{\rm curl}\,\MM{v}\,dt)\cdot\MM{\widehat{n}} \,{\MM{d}S} 
\,,
\label{Euler-helicity-thm}
\end{eqnarray}
which vanishes when homogeneous or periodic boundary conditions are imposed on $\partial D$.
\end{proof}

\begin{corollary}[The It\^o representation of stochastic Euler flows masks the Stratonovich preservation of helicity]\label{helicitynoncons-cor}
\end{corollary}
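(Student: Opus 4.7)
The plan is to compute the It\^o evolution of the helicity density $v\wedge\omega$ (with $\omega=\MM d v$) directly from the It\^o equations \eqref{ItoMotionVorticity-eqns}, and to contrast it with the clean Stratonovich evolution \eqref{helicity dens-eqn-Lie} in order to display the extra quadratic-covariation and double-Lie-derivative drift terms that obscure the pointwise conservation law. Global helicity will still be conserved after imposing boundary conditions, but the Stratonovich identity $(d+\pounds_{dx_t})(v\wedge\MM d v)=-\MM d(p\,\MM d v)\,dt$ will be seen to have no direct pointwise It\^o analogue.

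First I would apply the It\^o product rule to the wedge product $v\wedge\omega$, obtaining
\[
d(v\wedge\omega)=(dv)\wedge\omega+v\wedge(d\omega)+[dv,d\omega],
\]
where the quadratic covariation is evaluated from the common Brownian martingale parts of $v$ and $\omega$ computed from \eqref{ItoMotionVorticity-eqns}, giving $[dv,d\omega]=\sum_i\pounds_{\xi_i}v\wedge\pounds_{\xi_i}\omega\,dt$. Next I would substitute the It\^o equations for $dv$ and $d\omega$. The essential algebraic step is the Leibniz rule for the second-order operator $\pounds_{\xi_i}^2$ acting on the wedge product,
\[
\pounds_{\xi_i}\pounds_{\xi_i}(v\wedge\omega)=(\pounds_{\xi_i}\pounds_{\xi_i}v)\wedge\omega+2\,\pounds_{\xi_i}v\wedge\pounds_{\xi_i}\omega+v\wedge(\pounds_{\xi_i}\pounds_{\xi_i}\omega),
\]
which shows that the covariation $\sum_i\pounds_{\xi_i}v\wedge\pounds_{\xi_i}\omega\,dt$ combines with the two Lie-Laplacian drifts from $dv$ and $d\omega$ to reconstitute $\tfrac12\sum_i\pounds_{\xi_i}^2(v\wedge\omega)\,dt$. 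Using $\MM d\omega=\MM d^2 v=0$, the pressure contribution $-\MM d p\wedge\omega$ collapses to the exact 3-form $-\MM d(p\,\omega)$. The resulting pointwise It\^o helicity-density equation takes the form
\[
d(v\wedge\omega)+\pounds_{u}(v\wedge\omega)\,dt-\sum_i\pounds_{\xi_i}(v\wedge\omega)\,dW_i(t)=-\MM d(p\,\omega)\,dt+\tfrac12\sum_i\pounds_{\xi_i}\pounds_{\xi_i}(v\wedge\omega)\,dt,
\]
which, unlike \eqref{helicity dens-eqn-Lie}, carries explicit non-zero martingale and drift contributions and is not recognizable as a pointwise conservation law.

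Finally I would integrate over $D$ and impose homogeneous or periodic boundary conditions. Because $v\wedge\omega$ is a top form, each operation $\pounds_u$, $\pounds_{\xi_i}$, and $\pounds_{\xi_i}^2$ acting on it can be written as an exterior derivative via $\pounds_X\alpha=\MM d(i_X\alpha)$ on top forms together with the commutation of $\MM d$ with $\pounds_{\xi_i}$ (Proposition \ref{d-Delta-comm-rel-prop}); all such terms integrate to boundary fluxes that vanish, so $d\Lambda[\mathrm{curl}\,\MM v]=0$, in full agreement with Theorem \ref{helicitycons-thm}. The conclusion of the corollary is precisely the \emph{disparity} between this tidy integrated identity and the noisy, drift-laden pointwise expression above: the It\^o representation \emph{masks} the manifestly local Stratonovich helicity conservation law.

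The main obstacle will be correctly handling the quadratic covariation $[dv,d\omega]$ between objects of different form degree, since the wedge product is bilinear but not symmetric. This is resolved by noting that the wedge product commutes with the scalar factors $dW_i$ and that Lie derivative is a graded derivation on $\Lambda^\bullet$, so the covariation is the natural $\sum_i\pounds_{\xi_i}v\wedge\pounds_{\xi_i}\omega$; once this is matched against the three-term Leibniz expansion of $\pounds_{\xi_i}^2(v\wedge\omega)$, the rest of the calculation is routine bookkeeping.
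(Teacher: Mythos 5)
Your proposal is correct, but it follows a genuinely different route from the paper's, and the difference is instructive. The paper's proof substitutes the two separate It\^o equations \eqref{ItoMotionVorticity-eqns} for $dv$ and $d\omega$ into the ordinary Leibniz rule, arriving at \eqref{ItoHelicity-eqn} with the drift $\tfrac12\Delta_{Lie}v\wedge\omega + v\wedge\tfrac12\Delta_{Lie}\omega$; after integration this leaves the residual term $\int_D \MM v\cdot\Delta_{Lie}\MM\omega\,d^3x$ in \eqref{ItoHelicity-vectoreqn}, and the ``masking'' is read off from that apparently non-vanishing helicity production (Remark \ref{caution}: the It\^o form ``appears to predict reconnection''). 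You instead apply the full It\^o product rule, retaining the quadratic covariation $[dv,d\omega]=\sum_i\pounds_{\xi_i}v\wedge\pounds_{\xi_i}\omega\,dt$, and your three-term Leibniz identity $\pounds_{\xi_i}^2(v\wedge\omega)=(\pounds_{\xi_i}^2v)\wedge\omega+2\,\pounds_{\xi_i}v\wedge\pounds_{\xi_i}\omega+v\wedge\pounds_{\xi_i}^2\omega$ shows that this cross term is exactly what is needed to reassemble the drift into $\tfrac12\Delta_{Lie}(v\wedge\omega)$ --- i.e.\ into the honest It\^o transcription of the Stratonovich transport law \eqref{helicity dens-eqn-Lie} for the top form $v\wedge\omega$, whence every term is exact and the integrated helicity is conserved pathwise even in the It\^o picture. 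The two arguments support the same qualitative conclusion (the pointwise It\^o density equation is not recognizable as a conservation law, so the Stratonovich result is masked), but they differ quantitatively: the paper's displayed residual is precisely the covariation it omits, while your bookkeeping-complete version locates the masking entirely in the form of the local equation rather than in any genuine change of the global invariant. Your version is the sharper statement --- conservation of $\Lambda$ is a pathwise property of the process and cannot depend on which stochastic calculus is used to write its equation --- and it would be worth stating explicitly that this is why the reconnection suggested by \eqref{ItoHelicity-vectoreqn} is only apparent.
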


\begin{proof}
Equation \eqref{ItoMotionVorticity-eqns} for It\^o Euler--Poincar\'e incompressible flow yields, for $\omega=\MM{d}v$ as before,
\begin{align}
\begin{split}
(\partial_t + \pounds_{d\widehat{x}_t/dt})(v\wedge\omega)  & =
\left(-\MM dp + \frac12 \Delta_{Lie} v\right)\wedge\omega
+ v\wedge \frac12 \Delta_{Lie}\omega
\\ & =
-\,\MM d \left(p \omega + v\wedge\frac12 \Delta_{Lie} v\right)
+ v\wedge \Delta_{Lie}\omega
\,.\end{split}
\label{ItoHelicity-eqn}
\end{align}
In vector form, the last equation in \eqref{ItoHelicity-eqn} may be integrated over space and written as
\begin{align}
\frac{d}{dt} \int_D ({\MM v} \cdot \MM {\omega})\,{d}^3x = 
-\int_D {\rm div}\left(
({\MM v} \cdot \MM {\omega})\, \frac{d\MM{\widehat{x}}_t}{dt} + p\MM {\omega} 
+  \frac12 \MM v\times \Delta_{Lie}\MM {\omega}
\right){d}^3x
+ 
\int_D
(\MM v \cdot \Delta_{Lie}\MM {\omega} )
\,{d}^3x
\,.
\label{ItoHelicity-vectoreqn}
\end{align}
\end{proof}
\begin{remark}\label{caution}\rm
Hence, even for homogeneous or periodic boundary conditions, in which the integral of the divergence would vanish, there remains a non-vanishing term on the right hand side of equation \eqref{ItoHelicity-vectoreqn} for the It\^o evolution of the helicity, due to the additional quadratic drift term arising in the It\^o calculus. Thus, the It\^o stochastic dynamics appears to predict reconnection of vorticity field lines, although their linkages are actually \emph{preserved} in the Stratonovich representation. Hence, as usual, one must take caution in drawing conclusions about stochastic fluid dynamics, because some of its features may be representation-dependent.  
\end{remark}

\subsection{The effects of advected quantities}\label{Diamond-operator}
In this section, we compute the explicit formulae needed in applications of the system of stochastic equations in \eqref{SEP-eqns-thm} for the vector space $V$ of 3D quantities $q\in V$ consisting of elements with
the following coordinate functions in 3D Euclidean vector notation, 
\begin{equation}
q\in\{b,\mathbf{A}\cdot \MM {dx},\mathbf{B}\cdot \MM {dS},D\,{d}^3x\} =: V\,.
\label{Eul-ad-qts}
\end{equation}
Dual quantities under the $L^2$ pairing are $(b,D\,{d}^3x)$ and $(\mathbf{A}\cdot \MM {dx},\mathbf{B}\cdot \MM {dS})$.
The vector space $V$ contains the geometric quantities that typically occur in ideal continuum dynamics. 
These are: scalar functions $(b)$, 1-forms $(\mathbf{A}\cdot \MM {dx})$,
2-forms $(\mathbf{B}\cdot \MM {dS})$, and densities $(D\,{d}^3x)$ in three dimensions.
In addition, with applications to magnetohydrodynamics (MHD) in mind, we also choose
$\mathbf{B}={\rm curl}\,\mathbf{A}$ and
$\MM d(\mathbf{A}\cdot \MM {dx})=\mathbf{B}\cdot \MM {dS}$.
In Euclidean
coordinates on $\mathbb{R}^3$, this is $\MM{d}(A_k{d}x^k)=A_{k,j}{d}x^j\wedge {d}x^k
=\frac{1}{2} \epsilon_{ijk}B^i{d}x^j\wedge {d}x^k$, where $\epsilon_{ijk}$
is the completely antisymmetric tensor density on $\mathbb{R}^3$ with
$\epsilon_{123}=+1$. The two-form
$\mathbf{B}{\cdot}\MM {dS}=\MM d(\mathbf{A}\cdot \MM {dx})$
is the physically interesting
special case of $B_{kj}{d}x^j{\wedge}{d}x^k$ for MHD, in which
$B_{kj}=A_{k,j}$, so that $\nabla\cdot\mathbf{B}=0$. 
\begin{definition}[Deterministic advection relations for $q\in V$ in \eqref{Eul-ad-qts}]\label{diamond-def-q}\rm
The deterministic advection relations $\partial_t q = -\, \pounds_u q$ for the quantities $q\in V$ in \eqref{Eul-ad-qts} 
are given explicitly by the Lie-derivative action of smooth vector fields $u\in \mathfrak{X}(\mathbb{R}^3)$ 
on the vector space of variables $q\in V$. These deterministic advection relations are given by
\begin{align}
\begin{split}
\partial_t  b
&= -\pounds_u\, b = -\,\MM{u}\cdot\nabla\,b\,,
\\
\partial_t  \mathbf{A}\cdot \MM {dx}
&= -\pounds_{u}\,(\mathbf{A}\cdot \MM {dx})
=-\left((\MM{u}\cdot\nabla)\mathbf{A}+A_j\nabla u^j\right)
\cdot \MM {dx}
\\
&= \left(\MM{u}\times{\rm curl}\,\mathbf{A}
-\nabla(\MM{u}\cdot\mathbf{A})\right)\cdot \MM {dx}\,,
\\
\partial_t \mathbf{B}\cdot \MM {dS}
&=-\pounds_{u}\,(\mathbf{B}\cdot
\MM {dS})
= \left({\rm curl}\,(\MM{u}\times\mathbf{B})\right)\cdot \MM {dS}
\ =\ d (\partial_t  \mathbf{A}\cdot \MM {dx})
\,,
\\
\partial_t  D\ {d}^3x&=-\pounds_{u}\,(D\,{d}^3x)
= -\nabla\cdot({D\bf u})\ {d}^3x\,.
\end{split}
\label{Lie-derivs}
\end{align}
\end{definition}

\paragraph{The diamond operation.}
The diamond operation $(\diamond): T^*V\to \mathfrak{X}^*$ is defined for $(q,p)\in T^*V$ and $u\in \mathfrak{X}(\mathbb{R}^3)$ by equation \eqref{diamond-def-} as
\begin{equation}
\left\langle  p\diamond q\,,\,u \,\right\rangle_\mathfrak{X} = \left\langle  p\,,\,- \,\pounds_u q\,\right\rangle_V 
,\label{diamond-def-eqn}
\end{equation}
for the $L^2$ pairings $\langle  \,\cdot\,,\,\cdot\,\rangle_V: T^*V\times TV\to \mathbb{R}$ and
$\langle  \,\cdot\,,\,\cdot\,\rangle_\mathfrak{X}: \mathfrak{X}^*\times \mathfrak{X}\to \mathbb{R}$
with $p\diamond q\in \mathfrak{X}^*$. Under the $L^2$ pairing, we assume that boundary terms arising from
integrations by parts may be dropped, by invoking natural boundary conditions.

In particular, for the set of advected quantities $q\in V$ in \eqref{Eul-ad-qts} above, we
find the following Euclidean components of the sum of terms 
$\frac{\delta \ell}{\delta q}\diamond q$ in the motion equation \eqref{SEP-eqns-thm} in Theorem \ref{SEP-thm},
for stochastic EP equations,
\begin{align} 
\begin{split}
\left(    \frac{\delta \ell}{\delta q}\diamond q \right)_k
&=
-\ 
\frac{ \delta \ell}{ \delta b}(\nabla b)_k
\ +\ D\left(\nabla\frac{ \delta \ell}{ \delta D}\right)_k
+ 
\left(
-\ \frac{ \delta \ell}{ \delta \mathbf{A}}\times{\rm curl}\, \mathbf{A}
   +\mathbf{A}\ {\rm div}\,\frac{ \delta \ell}{ \delta \mathbf{A}}\
   +\mathbf{B}\times{\rm curl}\, \frac{ \delta \ell}{ \delta \mathbf{B}}
\right)_k
.
\end{split}
\label{diamond-q-eq}
\end{align}
%
With these definitions, one may write explicit formulae for the fluid examples needed in applications of the system of stochastic equations in \eqref{SEP-eqns-thm} for deterministic advected quantities $q$ in the vector space $V$ of three-dimensional quantities in \eqref{Eul-ad-qts}. These applications include, for example, geophysical fluid dynamics (GFD) and magnetohydrodynamics (MHD). The applications of the present theory to MHD will be pursued elsewhere. In this paper, we restrict ourselves to examples from GFD.  

\subsection{Stochastic geophysical fluid dynamics (SGFD)}\label{GFD-subsec}

\subsubsection{Euler-Boussinesq approximation}

\paragraph{Stochastic Euler-Boussinesq equations of a rotating stratified  incompressible fluid.}
In SGFD, the stochastic Euler--Poincar\'e (SEP) equations in \eqref{SEP-eqns-thm} are found in the Euler-Boussinesq approximation, for example, by choosing the Lagrangian $\ell({u},b,D)$ depending on the set $q\in\{b,D\}$ and given by \cite{HoMaRa2002}
\begin{align}
\ell(u,b,D) = \int \frac{D}{2} |u|^2 + Du\cdot R - gbDz - p(D-1)\,{d}^3x
\,,
\label{GFD-Lag}
\end{align}
where $u$ is fluid velocity, $D$ is the volume element, $2\Omega={\rm curl}R(x)$ is  the Coriolis vector, while $R(x)$, a given function of ${x}$ is its vector potential, $g$ is the constant gravitational acceleration, $b$ is buoyancy, and the pressure $p$ is a Lagrange multiplier which enforces $D=1$, so that ${\rm div}u=0$. The GFD Lagrangian \eqref{GFD-Lag} possesses the following variations at fixed ${x}$ and $t$,
\begin{align}
\begin{split}
\frac{m}{D}
&=
\frac{1}{D}\frac{{\delta} l}{{\delta} {u}}
= {u}+ {R}(x)
\,,
\quad
\frac{{\delta} l}{{\delta} b}
= - Dgz
\,,
\\
\frac{{\delta} l}{{\delta} D}
&=  \frac{1}{2}|{u}|^2 + {u}\cdot{R} - gzb - p
\,,\quad
\frac{{\delta} l}{{\delta} p} 
= -\,(D-1)\,.
\end{split}
\label{vds-1}
\end{align}
Hence, from the stochastic Euler--Poincar\'e (SEP) equations in \eqref{SEP-eqns-thm}, we find the motion equation for an Euler-Boussinesq fluid in three dimensions,
\begin{align}
du + \pounds_{dx_t}(u + R(x))
=
-gb\,\nabla z \,dt + \nabla\left(-p + \frac{1}{2}|{u}|^2+ {u}\cdot{R}\right)dt 
\,,\quad\hbox{and}\quad
db + \pounds_{dx_t}b  = 0
\,,
\label{SEP-eqns-thm-again2}
\end{align}
with $d\MM{x}_t$ given as before in \eqref{vel-vdt} and \eqref{stochpathdiff}.%
\footnote{In this section, we distinguish between a vector field $dx_t\in \mathfrak{X}(\mathbb{R}^3)$ and its vector representation in a Cartesian basis, $d\MM x_t\in\mathbb{R}^3$. For example, with this notation one may write $dx_t=d\MM x_t\cdot\nabla 
$.}
Consequently, we find the stochastic advection law,
\begin{align}
(d+ \pounds_{dx_t})Q  = dQ + d\MM{x}_t\cdot\nabla Q = 0
\,,
\label{SEP-PV-eqn}
\end{align}
for the potential vorticity, $Q$, defined by its traditional formula,
\begin{align}
Q := {\rm curl}(u+R(x))\cdot \nabla b \,,
\label{SEP-PV-def}
\end{align}
where the total vorticity $\omega={\rm curl}(u+R(x))$ satisfies 
\begin{align}
d\omega = {\rm curl}(dx_t\times \omega) -g\nabla b\times\nabla z \,dt
\,,\quad\hbox{with}\quad
db + dx_t\cdot\nabla b = 0
\label{SEP-omega-eqn}
\,.\end{align}
These results are implied by the Stratonovich circulation theorem in Section \ref{KelThmStrat-subsec}. The stochastic conservation law for the potential vorticity $Q$ in equation \eqref{SEP-PV-eqn}, means that $Q$ is preserved along \emph{each} Stratonovich stochastic path.  

\paragraph{It\^o form of the potential vorticity equation.} The interpretation of the It\^o form of the potential vorticity equation \eqref{SEP-PV-eqn} may be obtained by expanding it out using the It\^o equations in \eqref{ItoEP-eqns}. Indeed, the It\^o form of the Stratonovich equation \eqref{SEP-PV-eqn} for potential vorticity $Q$ is the precisely the same as the double Lie-derivative formula for the advected quantity $q$ in equation \eqref{ItoEP-eqns}; namely, 
\begin{equation}
dQ + \pounds_{d\widehat{x}_t}Q\,dt 
= 
\frac12
\sum_{i,j} \pounds_{\xi_j(x)}(\pounds_{\xi_i(x)}Q)  \,\delta_{ij}\,dt
\,,
\end{equation}
or, in vector calculus form, 
\begin{equation}
dQ + {d\MM{\widehat{x}}_t}\cdot\nabla Q
= 
\frac12
\sum_{i,j} {\MM \xi}_j(\MM x)\cdot\nabla ({\MM \xi}_i(\MM x)\cdot\nabla\, Q)  \,\delta_{ij}\,dt
\,.
\label{SEP-PV-Itoeqn}
\end{equation}

Upon comparing equations \eqref{SEP-PV-eqn} and \eqref{SEP-PV-Itoeqn}, one concludes that the stochastic Euler-Boussinesq equations \eqref{SEP-eqns-thm-again2} in three dimensions preserve the potential vorticity $Q$ defined in equation \eqref{SEP-PV-def} along the Stratonovich stochastic path ${x}_t(x)$; but $Q$ preservation is \emph{masked} in the It\^o representation, because $Q$ is not preserved along the It\^o stochastic path ${\hat{x}}_t(x)$, which of course is a \emph{different} path.
The operator $\Delta_{Lie}$ in the last term in the It\^o equation \eqref{SEP-PV-Itoeqn} reduces to the metric Laplacian $\Delta=\nabla^2$ in the case that the vectors ${\MM \xi}_j$ with $j=1,2,3,$ are linearly independent unit vectors in three dimensions. 

\subsubsection{Quasigeostrophic (QG) approximation}\label{SQG-sec}

\paragraph{Deterministic quasigeostrophic (QG) equations.} 
The quasigeostrophic (QG) approximation is a fundamental model which is often used  for the analysis of meso- and large-scale motion in geophysical and astrophysical fluid dynamics \cite{Pedlosky87}. Physically, the QG approximation applies when the motion is nearly in geostrophic balance, i.e., when pressure gradients nearly balance the
Coriolis force. In the simplest case of a barotropic fluid
in a domain ${\mathcal{D}}$ on the plane ${\mathbb R}^2$ with coordinates
$(x_1,x_2)$, geostrophic balance determines the geostrophic fluid velocity $\MM{u}$ as 
$
\MM{u} := \mathbf{\widehat{z}}\times \nabla \psi
\,,
$
where $\psi$ is the stream-function, the flow is incompressible ($\nabla\cdot \MM{u}=0$) and $\mathbf{\widehat{z}}$ is the unit vector normal to the plane.

QG dynamics in the
$\beta $-plane approximation is expressed by the following evolution
equation for the stream-function $\psi$ of the geostrophic fluid velocity $\MM{u}$,
\begin{equation}
{\frac{\partial ({\Delta }\psi -{\mathcal{F}}\psi )}{\partial t}}
+[\psi ,{\Delta }\psi ]_{Jac}
+\beta {\frac{\partial \psi}{\partial x_1}} =0,  \label{qgb}
\end{equation}
where $\partial /\partial t$ is the partial time derivative, ${\Delta }$ is
the planar Laplacian, ${\mathcal{F}}$ denotes rotational Froude number, 
the square bracket $[\,\cdot\,,\,\cdot\,]_{Jac}$ denotes
\begin{equation}
[\,a\,,\,b\,]_{Jac} :=  {\frac{\partial (a,b)}{\partial (x_1,x_2)}}
= 
\mathbf{\widehat{z}}\cdot \nabla a \times \nabla b
\,,
\label{Jacobi-brkt}
\end{equation}
which is the Jacobi
bracket (Jacobian) for functions $a$ and $b$ on ${\mathbb R}^2$, and $\beta $
is the $x_2$-gradient of the Coriolis parameter, $f$, taken as $f=f_0+\beta x_2$ in
the $\beta$-plane approximation, with constants $\beta $ and $f_0$. Neglecting
$\beta $ gives the $f$-plane approximation. 

The QG equation (\ref{qgb}) may be
derived from the basic equations of rotating shallow water flow by rescaling to define
nondimensional variables and making an asymptotic expansion in terms of the Froude 
number given by the square of the ratio of the characteristic scale of the motion
to the  deformation radius, see, for example, Pedlosky \cite{Pedlosky87}, Allen and Holm \cite{AlHo1996}, and Zeitlin and Pasmenter \cite{ZePa1994}) for details of the QG derivation. 

Equation (\ref{qgb}) may be written alternatively  as an advection law for the \emph{potential vorticity}, $Q$,
\begin{equation}
{\frac{\partial Q}{\partial t}}
=
-\,\MM{u}\cdot \nabla Q
=
-\,\mathbf{\widehat{z}}\times \nabla \psi \cdot \nabla Q
\,,\quad \hbox{where}\quad
Q := {\Delta }\psi -{\mathcal{F}}\psi +f\,.  
\label{qgb'}
\end{equation}
This form of QG dynamics emphasises its basic property; namely, potential vorticity
$Q$ is conserved on geostrophic fluid parcels.

Much is known about the mathematical structure of the deterministic QG equations
(\ref{qgb}) and (%
\ref{qgb'}). In particular, they are Hamiltonian with a Lie-Poisson bracket
given in Weinstein \cite{AW83} as
\begin{equation}
\{F,H\} = -\int Q \Big[{\frac{{\delta} F}{{\delta} \mu}}, 
{\frac{{\delta} H}{{\delta} \mu}}\Big]_{Jac}\,\MM{d}x_1\wedge\MM{d}x_2
\,,  \label{qg-lpb}
\end{equation}
where $\mu:= Q-f$ and the square bracket $[\,\cdot\,,\,\cdot\,]_{Jac}$ is the Jacobi bracket in \eqref{Jacobi-brkt}. In terms of the variable $\mu$, the Hamiltonian for
QG is expressed as
\begin{align}
H 
= {\frac{1 }{2}}\int_{\mathcal{D}} \Big (|{\mbox{\boldmath{$\nabla$}}}
\psi|^2+{\mathcal{F}}\psi^2 \Big ) \,\MM{d}x_1\wedge\MM{d}x_2\ 
= {\frac{1 }{2}}\int_{\mathcal{D}} \mu\ ({\mathcal{F}}-{\Delta})^{-1} \mu \,\MM{d}x_1\wedge\MM{d}x_2
+ {\frac{1 }{2}}\sum_i\oint_{{\gamma}_i}\psi\,\MM{u}\cdot \MM {dx}\,,
\label{qg-ham}
\end{align}
where ${\gamma}_i$ is the $i$-th connected component of the boundary $\partial\mathcal{D}$. In what follows, we will discuss cases where the domain ${\mathcal{D}}$ is either a  torus (periodic boundary conditions) or the whole plane ${\mathbb R}^2$ with decaying boundary conditions and, thus, the boundary
terms may be ignored. Hence, we may write the Hamiltonian in terms of the $L^2$ pairing $\langle\,\cdot\,,\,\cdot\,\rangle_{L^2}$ as
\begin{align}
H (\mu) =\frac12 \left\langle \mu\,,\,({\mathcal{F}}-{\Delta})^{-1} \mu\right\rangle_{L^2}
\quad\hbox{with}\quad
\frac{\delta H}{\delta \mu} = ({\mathcal{F}}-{\Delta})^{-1} \mu = \psi
\,.
\label{qg-ham-br}
\end{align}
Consequently, the Lie-Poisson bracket (\ref{qg-lpb}) gives, after integration
by parts, the dynamical equation for $\mu$,
\begin{equation}
\frac{\partial \mu}{ \partial t} = \{ \mu ,H \}
= -\,[\psi, Q]_{Jac} = -\, {\bf u} \cdot
{\mbox{\boldmath{$\nabla$}}} Q\,,
\label{qgb2}
\end{equation}
in agreement with the QG potential vorticity equation (\ref{qgb'}).
Casimirs of the Lie-Poisson bracket (\ref{qg-lpb}) are given by
\begin{equation}
C_{\Phi}=\int \Phi(Q)\,\MM{d}x_1\wedge\MM{d}x_2
\,,
\label{QG-Casimirs}
\end{equation}
for an arbitrary function $\Phi$ and they satisfy
$\{C_{\Phi},H\}=0$ for all Hamiltonians
$H(\mu)$. Level surfaces of the Casimirs $C_{\Phi}$ define coadjoint orbits of
the group of symplectic diffeomorphisms of the domain of the flow \cite{AW83}.

\paragraph{Stochastic quasigeostrophic (SQG) equations.} 
Stochastic QG may be derived by applying the Lie-Poisson structure in \eqref{qg-lpb} to a  stochastic Hamiltonian. Thus, to introduce stochastic forcing into the QG equations we propose to augment the QG Hamiltonian in \eqref{qg-ham-br} with a Stratonovich stochastic term, as
\begin{equation}
h\,dt= \left\langle \mu\,,\,\frac12 ({\mathcal{F}}-{\Delta})^{-1} \mu\,dt 
- \sum_i\xi_i(\MM{x}) \circ dW_i
\right\rangle_{L^2}
\,,
\label{qg-augHam}
\end{equation}
in which the $\{\xi_i(\MM{x})\}$ are prescribed spatial functions. 
The variational derivative of the augmented QG Hamiltonian in \eqref{qg-augHam} yields the Stratonovich \emph{stochastic stream function}
\begin{equation}
\frac{\delta h}{\delta \mu}\,dt 
= \psi(\MM{x},t)\,dt - \sum_i\xi_i(\MM{x}) \circ dW_i
=: \Psi\,dt 
\,.
\label{qg-stochSF}
\end{equation}
Consequently, the Lie-Poisson bracket (\ref{qg-lpb}) gives, after integration
by parts, the dynamical equation for $\mu$,
\begin{equation}
d\mu= \{ \mu ,h\,dt \}
= -\,[\Psi \,dt, Q]_{Jac} 
= -\,[\psi \,,\, Q]_{Jac} \,dt + \sum_i [ \xi_i(\MM{x}), Q]_{Jac}  \circ dW_i 
= -\, \MM {dx}_t \cdot \nabla Q\,,
\label{qgb3}
\end{equation}
in which, cf. equation \eqref{stochpathdiff},
\begin{equation}
d\MM{x}_t := \mathbf{\widehat{z}}\times \nabla \Psi\,dt
= \mathbf{\widehat{z}}\times \nabla \left(
\psi\,dt - \sum_i\xi_i(\MM{x}) \circ dW_i
\right)
= \MM{u}\,dt - \sum_i
\Big(\mathbf{\widehat{z}}\times \nabla \xi_i(\MM{x}) \Big) \circ dW_i
\,.
\label{qg-dx}
\end{equation}
This result is in agreement with the advection law in the stochastic equation \eqref{SEP-PV-eqn} for potential vorticity in the Euler-Boussinesq approximation, provided we identify $\mathbf{\widehat{z}}\times \nabla \xi_i(\MM{x})$ for these stochastic QG equations with the $\mathbb{R}^3$ vector functions $\boldsymbol{\xi}_i(\MM{x})$ for the stochastic Euler-Bouusinesq equations in the previous example. 

Of course, the Casimirs of the Lie-Poisson bracket (\ref{qg-lpb}) given by $C_{\Phi}=\int \Phi(Q)\,\MM{d}x_1\wedge\MM{d}x_2$ for an arbitrary function $\Phi$ are still conserved along the Stratonovich stochastic path, since this property holds for the Lie-Poisson bracket, independently of the choice of a stochastic Hamiltonian. 

\paragraph{It\^o form.} The It\^o form of equation \eqref{qgb3} is,
\begin{equation}
d\mu
= -\,\big[\psi\,dt -  \sum_i \xi_i(\MM{x})\, dW_i \,,\, Q \big]_{Jac} 
+ \frac12 \sum_{j}\big[ \xi_j(\MM{x}),[ \xi_j(\MM{x}), Q]_{Jac}\,\big]_{Jac} \,dt\,,
\label{qgb-Ito}
\end{equation}
where we have assumed that the It\^o stochastic processes $dW_i(t)$ and $dW_j(t)$ are Brownian processes, whose quadratic covariations satisfy $[dW_i(t),dW_j(t)]=\delta_{ij}\,dt$ \cite{Sh1996}. 
This SQG version of the It\^o dynamics for potential vorticity $Q$ is reminiscent of potential vorticity dynamics for dissipative QG with viscosity.

\section{Conclusions}\label{conclus-sec}



The stochastically constrained variational principle $\delta S = 0$ for the action $S$ given in \eqref{SVP1} for introducing Stratonovich stochasticity into Euler-Poincar\'e equations for continuum dynamics has yielded stochastic incompressible flows that were found to preserve three important properties of ideal incompressible Euler flows which arise from its invariance under relabelling of Lagrangian coordinates. These three properties are: (i) the Kelvin circulation theorem in equation \eqref{KNthm-StochEuler}; (ii) invariance of the flux of vorticity in \eqref{SEPvorticity-eqn} through any surface element following the Stratonovich stochastic path $x_t$ in \eqref{stochpathdiff};  and (iii) preservation of the linkage number for the vorticity field lines (helicity) in Theorem \ref{helicitycons-thm}. The Kelvin circulation theorem is preserved because the material fluid loops follow the Stratonovich stochastic paths, as do the field lines of the vorticity, so all three results follow from the same interpretation. Likewise, the stochastic conservation law for the potential vorticity $Q$, found in equation \eqref{SEP-PV-eqn} for the Euler-Boussinesq equations with Stratonovich noise, meant that $Q$ was preserved along \emph{each} Stratonovich stochastic path. The same type of potential vorticity preservation along Stratonovich stochastic paths was found again for the case of stochastic quasigeostrophy (SQG) in Section \ref{SQG-sec}. 

In contrast, the It\^o representation of the stochastic equations involved a different stochastic vector field $d\widehat{x}_t$ in \eqref{vel-ydt} whose drift velocity contained the additional quadratic term that arises in It\^o calculus. This quadratic It\^o drift velocity introduced terms that were not present in the Stratonovich stochastic equations and could not be expressed as single Lie derivatives. As a result, the It\^o representation masked the preservation of ideal incompressible Euler flow properties and conservation of potential vorticity which was found to hold in the Stratonovich case. 

The It\^o drift term turned out to contain an interesting Laplacian-like operator, $\Delta_{Lie}$, defined in equation \eqref{LieLaplacian} by a sum over double Lie derivatives with respect to the POD vector fields, ${\xi_j(x)}$, $j=1,2,\dots,K,$ as
$\Delta_{Lie}:= \sum_{j} \pounds_{\xi_j(x)}(\pounds_{\xi_j(x)}\,\cdot\,)$,
which we called the \emph{Lie Laplacian} operator, since $\pounds_{\xi_j(x)}$ denotes the Lie derivative with respect to the vector field $\xi_j(x)$. This term is the generalisation for advected quantities $q\in V$, in an arbitrary vector space $V$, of the quadratic covariation drift term found already for scalar densities by Stratonovich \cite{St1966}. 
The Lie Laplacian operator appearing, for example, in the It\^o representation of the stochastic fluid equations \eqref{ItoEP-eqns} is not a standard Laplacian operator, although it reduces to the metric Laplacian when the independent vector fields $\xi_j(x)$ for stochastic spatial correlations are constant and $j=1,2,3$. It would be interesting to know whether (because of its relation to the metric Laplacian)  the Lie Laplacian operator in the It\^o representation could have a regularising effect on stochastic fluid equations which might otherwise be ill-posed, as suggested in \cite{Fl2011,FlMaNe2014}. 

\paragraph{Data accessibility} No data 
\paragraph{Competing interests} No competing interests
\paragraph{AuthorsÕ contributions} Sole author, only one contribution
\paragraph{Acknowledgements} 
I am very grateful for the encouragement of the many people who took the time to discuss these matters with me, or comment on drafts, especially my friends and colleagues N. Bou-Rabee, A. Castro, C. J. Cotter, D. Crisan, T. D. Drivas, B. K. Driver, F. Flandoli, F. Gay-Balmaz, T. G. Kurtz, J. P. Ortega, G. Pavliotis and T. Tyranowsky. However, as usual, any mistakes belong to the author. \paragraph{Funding statement} This work was partially supported by the European Research Council Advanced Grant 267382 FCCA.


\end{document}